\newtheorem{theorem}{Theorem}
\newtheorem{lemma}{Lemma}
\newtheorem{corollary}{Corollary}
\newtheorem{remark}{Remark}
\newenvironment{myproof}[1][\proofname]{\proof[#1]}{\endproof}
\DeclarePairedDelimiter{\ceil}{\lceil}{\rceil}
\DeclarePairedDelimiter\floor{\lfloor}{\rfloor}
\newcommand{\R}{\mathbb{R}}
\newcommand{\cR}{{\cal R}}
\def\b{\mathbf}
\DeclareMathOperator*{\argmin}{arg\,min}
\DeclareMathOperator*{\argmax}{arg\,max}
\begin{document}

\title{The Necessity of Scheduling in Compute-and-Forward}

\author{\IEEEauthorblockN{Ori Shmuel}
\IEEEauthorblockA{Department of Communication\\ System Engineering\\
Ben-Gurion University of the Negev\\
Email: shmuelor@bgu.ac.il}
\and
\IEEEauthorblockN{Asaf Cohen}
\IEEEauthorblockA{Department of Communication\\ System Engineering\\
Ben-Gurion University of the Negev\\
Email: coasaf@bgu.ac.il}
\and
\IEEEauthorblockN{Omer Gurewitz}
\IEEEauthorblockA{Department of Communication\\ System Engineering\\
Ben-Gurion University of the Negev\\
Email: gurewitz@bgu.ac.il}}


%


\maketitle

\begin{abstract}

Compute and Forward (CF) is a promising relaying scheme which, instead of decoding single messages or forwarding/amplifying information at the relay, decodes linear combinations of the simultaneously transmitted messages. The current literature includes several coding schemes and results on the degrees of freedom in CF, yet for systems with a fixed number of transmitters and receivers. It is unclear, however, how CF behaves at the limit of a large number of transmitters. 

In this paper, we investigate the performance of CF in that regime. Specifically, we show that as the number of transmitters grows, CF becomes degenerated, in the sense that a relay prefers to decode only one (strongest) user instead of any other linear combination of the transmitted codewords, treating the other users as noise. Moreover, the sum-rate tends to zero as well. This makes scheduling necessary in order to maintain the superior abilities CF provides. Indeed, under scheduling, we show that non-trivial linear combinations are chosen, and the sum-rate does not decay, even without state information at the transmitters and without interference alignment. 

\end{abstract}

%
\IEEEpeerreviewmaketitle


\section{Introduction}

Compute and Forward (CF) \cite{nazer2011compute} is a coding scheme which enables receivers to decode linear combinations of transmitted messages, exploiting the broadcast nature of wireless relay networks. CF utilizes the shared medium and the fact that a receiver, which received multiple transmissions simultaneously, can treat them as a superposition of signals, and decode linear combinations of the received messages. Specifically, together with the use of lattice coding, the obtained signal, after decoding, can be considered as a linear combination of the transmitted messages. This is due to an important characteristic of lattice codes - every linear combination of codewords is a codeword itself.

However, since the wireless channel suffers from fading, the received signals are attenuated by real (and not integers) attenuations factors, hence the received linear combination is "noisy". The receiver (e.g., a relay) then seeks a set of integer coefficients, denoted by a vector $\b{a}$, to be as close as possible\footnote{One can define different criteria for the goodness of the approximation, for example, the minimum distance between the vectors elements.} to the true channel coefficients.

This problem was elegantly associated with \emph{Diophantine Approximation Theory} in \cite{niesen2012degrees}, and was compared to a similar problem, that of finding a co-linear vector for the true channel coefficients vector (between the receiver and the transmitters). In addition, the co-linear vector must be an integer valued vector, due to the fact that it should represent the coefficients of an integer linear combination of codewords. Based on this theory, if one wishes to find an integer vector $\textbf{a}$ that is close (in terms of co-linearity) to a real vector $\textbf{h}$, then one must increase $||\textbf{a}||$ in order to have a small approximation error between them. The increase in the norm value leads to a significant penalty in the achievable rate at the receiver and thus results in a tradeoff between the goodness of the approximation and the maximization of the rate.

The CF scheme was extended in many directions, such as MIMO CF \cite{zhan2009mimo}, linear receivers (Integer Forcing) \cite{zhan2014integer}  \cite{sakzad2013integer}, integration with interference alignment \cite{niesen2012degrees}, scheduling \cite{he2015collision} and more \cite{wei2012compute}, \cite{hong2013compute}. All the mentioned works considered a general setting, where the number of transmitters is a parameter for the system and \emph{all transmitters are active at all times}. That is, the receiver is able to decode a linear combination of signals from a large number of transmitters as long as the transmitters comply with the achievable rates at the receiver, and still promise, to some extent, an acceptable performance.

However, in this work, we show that the number of simultaneous transmitters is of great importance when the number of relays is fixed. In fact, this number cannot be considered solely as a parameter but as a restriction since, when it grows, the receiver will prefer to decode only the strongest user over all possible linear combinations. This will make the CF scheme degenerated, in the sense that the relay chooses a vector $\b{a}$ which is actually a unit vector (a line in the identity matrix), thus treating all other signals as noise. In other words, the linear combination chosen is trivial. Furthermore, we show that as the number of transmitters grows, the scheme's \emph{sum-rate} goes to zero as well. Thus, one is forced to use \emph{users scheduling} to maintain the superior abilities CF provide.  

We conclude this paper with an optimistic view, that user scheduling can improve the CF gain. We believe that this can be done by suitable matching of linear combinations, i.e. coding possibilities. Using simple Round Robin scheduling and results for CF in fixed size systems, we lower bound the sum-rate. We thus show that even for a simple scheduling policy the system sum-rate does not decay to zero.


The paper is organized as follows. In Section \ref{sec-System model}, the system model is described. In Section \ref{sec-Probability of a unit vector}, we derive an analytical expression for the probability of choosing a unit vector by the relay, as the number of users grows. Section \ref{sec-Compute and Forward Sum-Rate} depicts the behaviour of the sum-rate for this model, and in Section \ref{sec-Scheduling in CF} we present the advantage of using scheduling, along with a simple scheduling algorithm.


\begin{figure}[t]
\centering
    \includegraphics[width=0.4\textwidth]{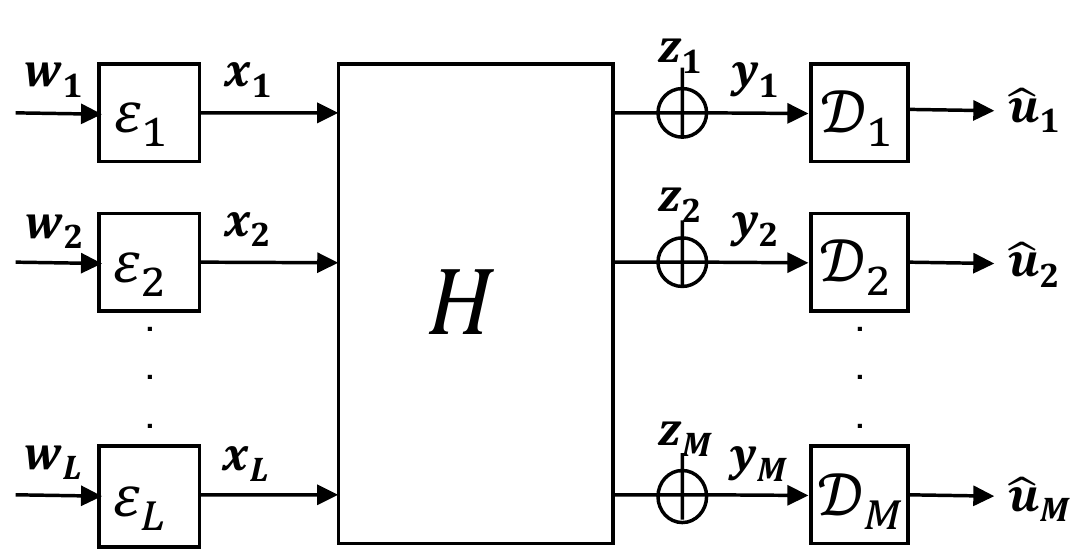}
\caption{Compute and Forward system model. $L$ transmitters communicate through a shared medium to $M$ relays.}
\label{fig-CF_System_model}
\end{figure}

\section{System model and Known results} \label{sec-System model}
Consider a multi-user multi-relay network, where $L$ transmitters are communicating to a single destination $D$ via $M$ relays. The model is illustrated in figure \ref{fig-CF_System_model}. All relays form a layer between the transmitters and the destination such that each transmitter can communicate with all the relays. Each transmitter draws a length-$k$ message with equal probability over a prime size finite field, $\mathbf{w}_l \in \mathbb{F}_p^k, \ l=1,2,...,L$, where $\mathbb{F}_p$ denotes the finite field with a set of $p$ elements.

This message is then forwarded to the transmitter's encoder, $\mathcal{E}_l:\mathbb{F}_p^k \rightarrow \mathbb{R}^n$, which maps length-$k$ messages over the finite field to length-$n$ real-valued codewords, $\mathbf{x}_l=\mathcal{E}_l(\mathbf{w_l})$. Each codeword is subject to a power constraint, $\|\mathbf{x}_l\|^2 \leq nP$. The message rate of each transmitter is defined as the length of the message measured in bits normalized by the number of channel uses, that is, $R=\frac{k}{n}\log{p}$, which is equal\footnote{Note that messages with different length can be allowed with zero padding to attain a length-$k$ message which will result in different rates for the transmitters.} for each transmitter. Each transmitter then broadcasts it's codeword to the channel.

Hence, each relay $m\in\{1...M\}$ observes a noisy linear combination of the transmitted signals through the channel,
\begin{equation}
\mathbf{y}_m=\sum_{i=1}^{L}h_{ml}\mathbf{x}_l+\mathbf{z}_m \ \ \ \ m=1,2,...,M,
\end{equation}

where $h_{ml} \sim \mathcal{N}(0,1)$ are the real channel coefficients and $\mathbf{z}$ is an i.i.d., Gaussian noise, $\mathbf{z} \sim \mathcal{N}(0,\mathbf{I}^{n\times n})$. Let $\mathbf{h}_m= [h_{m1},h_{m2},...,h_{mL}]^T$ denote the vector of channel coefficients at relay $m$. We assume that each relay knows its own channel vector. After receiving the noisy linear combination, each relay selects a scale coefficient $\alpha_m \in \mathbb{R}$, an integer coefficients vector $\b{a}_m=(a_{m1},a_{m2},...,a_{mL})^T \in \mathbb{Z}^L$, and attempts to decode the lattice point $\sum_{l=1}^La_{ml}x_l$ from $\alpha_m\b{y}_m$.

In CF each relay decodes a linear combination $\b{u}_m$ of the original messages, and forward it to the destination. With enough linear combinations, the destination is able to recover the desired (original) messages from all sources. 

The main results in CF are the following.
\begin{theorem}[{\cite[Theorem 1]{nazer2011compute}}]\label{the-Computation rate}
For real-valued AWGN networks with channel coefficient vectors $\b{h}_m \in \mathbb{R}^L$ and coefficients vector $\b{a}_m \in \mathbb{Z}^L$, the following computation rate region is achievable:
\begin{equation}
\cR(\b{h}_m,\b{a}_m)= \max \limits_{\alpha_m \in \R} \frac{1}{2} \log^+ \left( \frac{P}{\alpha_m^2+P\|\alpha_m \b{h}_m-\b{a}_m\|^2} \right),
\end{equation}
\end{theorem}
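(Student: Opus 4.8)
The plan is to prove achievability via \emph{nested lattice codes}, exploiting the fact that any integer linear combination of lattice points is again a lattice point. First I would fix a pair of nested lattices $\Lambda \subseteq \Lambda_f$ in $\mathbb{R}^n$, with $\Lambda$ the coarse (shaping) lattice and $\Lambda_f$ the fine (coding) lattice, and require that \emph{every} transmitter share the same coarse lattice $\Lambda$, whose per-dimension second moment is normalized to $P$ so that the constraint $\|\mathbf{x}_l\|^2 \le nP$ holds. Each message $\mathbf{w}_l$ is mapped to a coset leader $\mathbf{t}_l \in \Lambda_f \cap \mathcal{V}(\Lambda)$, and the transmitted codeword is obtained by adding an independent dither $\mathbf{d}_l$, uniform over the Voronoi region $\mathcal{V}(\Lambda)$, reduced modulo $\Lambda$. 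By the Crypto Lemma this renders each $\mathbf{x}_l$ uniform over $\mathcal{V}(\Lambda)$ and statistically independent of $\mathbf{t}_l$, which simultaneously enforces the power constraint and decouples the signal from the effective noise.

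Next I would analyze the relay's operation. After scaling $\mathbf{y}_m$ by $\alpha_m$ and subtracting the known dither contribution $\sum_l a_{ml}\mathbf{d}_l$, the relay obtains, modulo $\Lambda$, the target lattice point $\sum_l a_{ml}\mathbf{t}_l$ corrupted by the \emph{effective noise}
\begin{equation*}
\mathbf{z}_{\mathrm{eff}} = \sum_{l=1}^{L}(\alpha_m h_{ml}-a_{ml})\,\mathbf{x}_l + \alpha_m \mathbf{z}_m .
\end{equation*}
The central computation is that the per-dimension second moment of $\mathbf{z}_{\mathrm{eff}}$ equals $N_{\mathrm{eff}} = \alpha_m^2 + P\|\alpha_m\mathbf{h}_m - \mathbf{a}_m\|^2$, which follows since the dithered codewords are mutually uncorrelated, each of power $P$ per dimension, and independent of the unit-variance noise $\mathbf{z}_m$. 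Decoding then reduces to quantizing this observation to the nearest point of $\Lambda_f$ and reducing modulo $\Lambda$, after which I would show the recovered value is $\sum_l a_{ml}\mathbf{t}_l$ with high probability.

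Finally I would establish that the error probability vanishes for any rate strictly below $\tfrac{1}{2}\log(P/N_{\mathrm{eff}})$, and then optimize over $\alpha_m$ and take the positive part to recover $\cR(\mathbf{h}_m,\mathbf{a}_m)$. The main obstacle is this last decoding step: $\mathbf{z}_{\mathrm{eff}}$ is \emph{not} Gaussian — it is a sum of scaled codewords uniform over $\mathcal{V}(\Lambda)$ — so a direct appeal to the AWGN threshold of $\Lambda_f$ is unavailable. I would resolve this by invoking the existence of nested lattice pairs that are simultaneously good for quantization (so $\mathcal{V}(\Lambda)$ is near-spherical and the self-interference term is approximately white) and good for channel coding in the Poltyrev sense, and then comparing the non-Gaussian $\mathbf{z}_{\mathrm{eff}}$ to an equivalent Gaussian of the same second moment $N_{\mathrm{eff}}$. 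The Poltyrev error exponent for the good fine lattice then drives the decoding error to zero as $n \to \infty$ whenever the rate lies below $\tfrac{1}{2}\log(P/N_{\mathrm{eff}})$; optimizing $\alpha_m$ (the MMSE scaling is the optimizer) and applying $\log^+$ yields the claimed expression.
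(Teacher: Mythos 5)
Your outline is correct and follows essentially the same route as the original proof of this theorem in \cite{nazer2011compute} (the present paper only cites the result and provides no proof of its own): nested lattice codes with dithers and the Crypto Lemma, scaling by $\alpha_m$ and dither removal to obtain effective noise of per-dimension power $\alpha_m^2+P\|\alpha_m\mathbf{h}_m-\mathbf{a}_m\|^2$, and nested lattice pairs simultaneously good for quantization and for AWGN-like coding to handle the non-Gaussian effective noise via the Poltyrev exponent. You correctly identify and resolve the one genuinely delicate step (the effective noise is a mixture of uniform self-interference and Gaussian noise), so nothing further is needed.
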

where $\log^+(x)\triangleq \max \{\log(x),0\}$.
\begin{theorem}[{\cite[Theorem 2]{nazer2011compute}}]\label{the-Computation rate with MMSE}
The computation rate given in Theorem \ref{the-Computation rate} is uniquely maximized by choosing $\alpha_m$ to be the MMSE coefficient
\begin{equation}
\alpha_{MMSE}=\frac{P\b{h}_m^T\b{a}_m}{1+P\|\b{h_m}\|^2},
\end{equation}
which results in a computation rate region of
\begin{equation}\label{equ-Computation rate with MMSE}
\cR(\b{h}_m,\b{a}_m)= \frac{1}{2} \log^+ \left( \|\b{a}_m\|^2- \frac{P(\b{h}_m^T\b{a}_m)^2}{1+P\|\b{h_m}\|^2} \right)^{-1}.
\end{equation}
\end{theorem}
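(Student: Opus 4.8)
The plan is to note that in Theorem~\ref{the-Computation rate} both the constant factor $\tfrac{1}{2}$ and the function $\log^{+}(\cdot)$ are non-decreasing in their argument, so the maximization over $\alpha_m$ may be carried out on the inner argument alone. Writing
\[
g(\alpha_m)=\frac{P}{\alpha_m^{2}+P\|\alpha_m\b{h}_m-\b{a}_m\|^{2}},
\]
one has $\max_{\alpha_m}\tfrac{1}{2}\log^{+}g(\alpha_m)=\tfrac{1}{2}\log^{+}\big(\max_{\alpha_m}g(\alpha_m)\big)$. Since the numerator $P>0$ does not depend on $\alpha_m$, maximizing $g$ is equivalent to minimizing the denominator $f(\alpha_m)=\alpha_m^{2}+P\|\alpha_m\b{h}_m-\b{a}_m\|^{2}$.

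Next I would expand the norm to express $f$ as a scalar quadratic,
\[
f(\alpha_m)=\big(1+P\|\b{h}_m\|^{2}\big)\alpha_m^{2}-2P(\b{h}_m^{T}\b{a}_m)\alpha_m+P\|\b{a}_m\|^{2}.
\]
Because the leading coefficient $1+P\|\b{h}_m\|^{2}$ is strictly positive, $f$ is strictly convex and has a unique minimizer, found from $f'(\alpha_m)=0$. Solving this linear equation gives exactly $\alpha_m=\alpha_{MMSE}=P\b{h}_m^{T}\b{a}_m/(1+P\|\b{h}_m\|^{2})$; strict convexity makes it the \emph{unique} maximizer of the rate, which establishes the first assertion of the theorem.

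Finally I would substitute $\alpha_{MMSE}$ back into $f$. Evaluating the quadratic at its vertex yields
\[
f(\alpha_{MMSE})=P\|\b{a}_m\|^{2}-\frac{P^{2}(\b{h}_m^{T}\b{a}_m)^{2}}{1+P\|\b{h}_m\|^{2}}=P\left(\|\b{a}_m\|^{2}-\frac{P(\b{h}_m^{T}\b{a}_m)^{2}}{1+P\|\b{h}_m\|^{2}}\right),
\]
and dividing $P$ by this value cancels the common factor, leaving the reciprocal of the bracketed term. Thus $\cR(\b{h}_m,\b{a}_m)=\tfrac{1}{2}\log^{+}\big(\|\b{a}_m\|^{2}-P(\b{h}_m^{T}\b{a}_m)^{2}/(1+P\|\b{h}_m\|^{2})\big)^{-1}$, which is exactly \eqref{equ-Computation rate with MMSE}.

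The computation is elementary, reducing to the minimization of a one-dimensional convex quadratic, so I do not anticipate a genuine obstacle. The only step meriting care is the interchange of $\max_{\alpha_m}$ with $\log^{+}$, which is justified solely by the monotonicity of $\log^{+}$; one should also note that when the optimized argument drops below one the rate is clipped to zero, but this does not alter the identity of the minimizing $\alpha_m$ and hence does not affect the claimed expression.
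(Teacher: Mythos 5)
Your proposal is correct and is essentially the standard argument behind the cited result (\cite[Theorem 2]{nazer2011compute}), which this paper quotes without reproving: reduce the maximization over $\alpha_m$, via monotonicity of $\log^{+}$, to minimizing the strictly convex scalar quadratic $f(\alpha_m)=(1+P\|\b{h}_m\|^{2})\alpha_m^{2}-2P(\b{h}_m^{T}\b{a}_m)\alpha_m+P\|\b{a}_m\|^{2}$, whose unique vertex is $\alpha_{MMSE}$ and whose minimum value $P\bigl(\|\b{a}_m\|^{2}-P(\b{h}_m^{T}\b{a}_m)^{2}/(1+P\|\b{h}_m\|^{2})\bigr)$ cancels the numerator $P$ to give \eqref{equ-Computation rate with MMSE}. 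Your closing caveat is the right one to flag: uniqueness properly refers to the unclipped argument, since when the optimal value of the inner expression is at most one, every $\alpha_m$ attains the clipped rate zero, but this does not change $\alpha_{MMSE}$ or the stated rate formula.
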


Note that the above theorems are for real channels and the rate expressions for the complex channel are twice the above (\cite[Theorems 3 and 4]{nazer2011compute}).

Since the relay can decide which linear combination to decode (i.e., the coefficients vector $\textbf{a}$), an optimal choice will be one that maximizes the achievable rate. That is,
\begin{equation}\label{equ-Optimal a vector definition}
\textbf{a}_m^{opt}=\argmax_{\textbf{a}_m\in\mathbb{Z}^L \backslash \{\textbf{0}\}}\frac{1}{2}\log^+\left(\|\textbf{a}_m\|^2-\frac{P(\mathbf{h}_m^T\mathbf{a}_m)^2}{1+P\|\mathbf{h}_m\|^2} \right)^{-1}.
\end{equation}

\begin{remark}[The coefficients vector]
The coefficients vector $\b{a}$ plays a significant role in the CF scheme. It dictates which linear combination of the transmitted codewords the relay wishes to decode. That is, each non-zero element signifies the fact that the relay is interested in it's corresponding codeword. If, starting from a certain number of simultaneously transmitting users, the coefficients vector the relay chooses is always (or with high probability) a unit vector, this means that essentially we treat all other users as noise and loose the promised gain of CF. 
\end{remark}

The following Lemma bounds the search domain for the maximization problem in \eqref{equ-Optimal a vector definition}.
\begin{lemma}[{\cite[Lemma 1]{nazer2011compute}}]\label{lem-Search domain for the vector coefficients}
For a given channel vector $\b{h}$, the computation rate $\cR(\b{h}_m,\b{a}_m)$ in Theorem \ref{the-Computation rate with MMSE}  is zero if the coefficient vector $\b{a}$ satisfies
\begin{equation}\label{equ-Search domain for the vector coefficients}
\|\b{a}_m\|^2 \geq 1+P\|\b{h}_m\|^2.
\end{equation}
\end{lemma}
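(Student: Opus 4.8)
The plan is to read the ``rate equals zero'' condition directly off the MMSE rate formula in Theorem~\ref{the-Computation rate with MMSE} and then control the cross term by Cauchy--Schwarz. Abbreviate the quantity inside the rate expression by $f(\b{a}_m) \triangleq \|\b{a}_m\|^2 - \frac{P(\b{h}_m^T\b{a}_m)^2}{1+P\|\b{h}_m\|^2}$, so that $\cR(\b{h}_m,\b{a}_m) = \frac{1}{2}\log^+\!\big(f(\b{a}_m)^{-1}\big)$. Because $\log^+(x)=\max\{\log x,0\}$, the rate is zero exactly when $f(\b{a}_m)^{-1}\le 1$, i.e.\ when $f(\b{a}_m)\ge 1$. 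Thus the entire lemma reduces to the implication $\|\b{a}_m\|^2\ge 1+P\|\b{h}_m\|^2 \Rightarrow f(\b{a}_m)\ge 1$.

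To establish that implication I would bound the cross term from above by Cauchy--Schwarz, $(\b{h}_m^T\b{a}_m)^2\le\|\b{h}_m\|^2\,\|\b{a}_m\|^2$, which lower-bounds $f$ and simultaneously shows it is positive:
\[
f(\b{a}_m)\;\ge\;\|\b{a}_m\|^2-\frac{P\|\b{h}_m\|^2\,\|\b{a}_m\|^2}{1+P\|\b{h}_m\|^2}\;=\;\frac{\|\b{a}_m\|^2}{1+P\|\b{h}_m\|^2},
\]
where the last equality uses $1-\frac{P\|\b{h}_m\|^2}{1+P\|\b{h}_m\|^2}=(1+P\|\b{h}_m\|^2)^{-1}$. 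Plugging the hypothesis $\|\b{a}_m\|^2\ge 1+P\|\b{h}_m\|^2$ into the right-hand side yields $f(\b{a}_m)\ge 1$, hence $\cR(\b{h}_m,\b{a}_m)=0$.

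I do not anticipate a genuine obstacle: the single non-trivial inequality is Cauchy--Schwarz, after which the algebra collapses because of the convenient telescoping of the $\frac{P\|\b{h}_m\|^2}{1+P\|\b{h}_m\|^2}$ factor. The only point that merits care is the equivalence ``$\cR=0 \iff f\ge 1$'', which depends on the $\log^+$ convention and on $f$ being a well-defined positive quantity (so that $f^{-1}$ exists); both are guaranteed by the same Cauchy--Schwarz step, since $f$ is precisely the effective post-MMSE noise variance and is therefore strictly positive for any nonzero $\b{a}_m$. It is worth noting that the bound is attained when $\b{a}_m$ is collinear with $\b{h}_m$, which dovetails with the co-linearity / Diophantine-approximation perspective highlighted in the introduction.
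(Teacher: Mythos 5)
Your proof is correct and coincides with the standard argument: the paper states this lemma without proof, importing it from \cite{nazer2011compute}, and the original proof there is exactly your Cauchy--Schwarz step $(\b{h}_m^T\b{a}_m)^2\le\|\b{h}_m\|^2\|\b{a}_m\|^2$ followed by the same simplification $f(\b{a}_m)\ge\|\b{a}_m\|^2/(1+P\|\b{h}_m\|^2)\ge 1$, which forces $\log^+\bigl(f(\b{a}_m)^{-1}\bigr)=0$. Your additional remarks on the $\log^+$ convention and the strict positivity of $f$ (so that $f^{-1}$ is well defined) are sound and need no changes.
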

The problem of finding the optimal $\textbf{a}$ can be done by exhaustive search for small values of $L$. However, as $L$ grows, the problem becomes prohibitively complex quickly. In fact, it becomes a special case of the lattice reduction problem, which has been proved to be NP-complete. This can be seen if we write the maximization problem of \eqref{equ-Optimal a vector definition} as an equivalent minimization problem \cite{sahraei2014compute}:

\begin{equation}\label{equ-Optimal a vector in quadratic form}
\textbf{a}_m^{opt}=\argmin_{\textbf{a}_m\in\mathbb{Z}^L \backslash \{\textbf{0}\}} f(\textbf{a}_m)=\textbf{a}_m^T\textbf{G}_m\textbf{a}_m,
\end{equation}

where $\textbf{G}_m=(1+P\|\b{h}_m\|^2)\b{I}-P\b{h}_m\b{h}_m^T$. $\b{G}_m$ can be regarded as the Gram matrix of a certain lattice and $\b{a}_m$ will be the shortest basis vector and the one which minimize $f$. This problem is also known as the \emph{shortest lattice vector} problem (SLV), which has known approximation algorithms due to its hardness \cite{dadush2011enumerative, alekhnovich2005hardness}. The most notable of them is the LLL algorithm \cite{lenstra1982factoring, gama2008finding} which has an exponential approximation factor which grows with the size of the dimension. However, for special lattices, efficient algorithms exist \cite{conway2013sphere}. In \cite{sahraei2014compute}, a polynomial complexity algorithm was introduced for the special case of finding the best coefficient vector in CF.


\section{Probability of a Unit Vector}\label{sec-Probability of a unit vector}
In this section, we examine the coefficient vector at a single relay, hence, we omit the index $m$ in the expressions. 

\subsection{The Matrix $\b{G}$}
Examining the matrix $\b{G}$, one can notice that as $L$, the number of transmitters, grows, the diagonal elements grow very fast relatively to the off-diagonal elements. Specifically, each diagonal element is a random variable, which is a $\chi^2_L$ r.v. minus a multiplication of two Gaussian r.vs., whereas the off-diagonal elements are only a multiplication of two Gaussian r.vs.. Of course, as $L$ grows, the former has much higher expectation value compared to later. Examples of $\b{G}$ are presented in Figure \ref{fig-Gmatrix}, for different dimensions. It is clear that even for moderate number of transmitters, the differences in values between the diagonal and off-diagonal elements are significant.

\begin{figure}[t]
    \centering
    \begin{subfigure}[b]{0.23\textwidth}
        \includegraphics[width=0.95\textwidth]{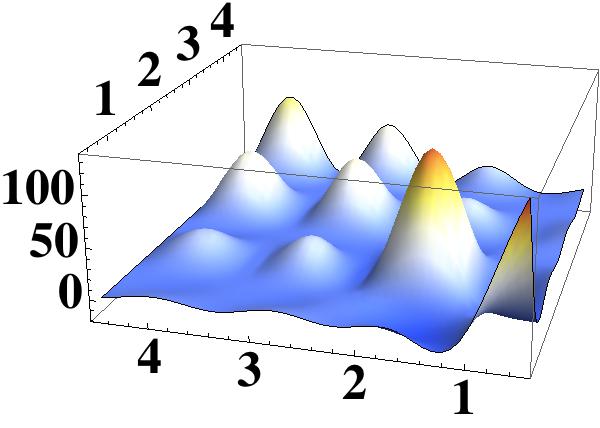}
        \caption{$L=4$}
    \end{subfigure}
    ~
    \centering
    \begin{subfigure}[b]{0.23\textwidth}
        \includegraphics[width=0.95\textwidth]{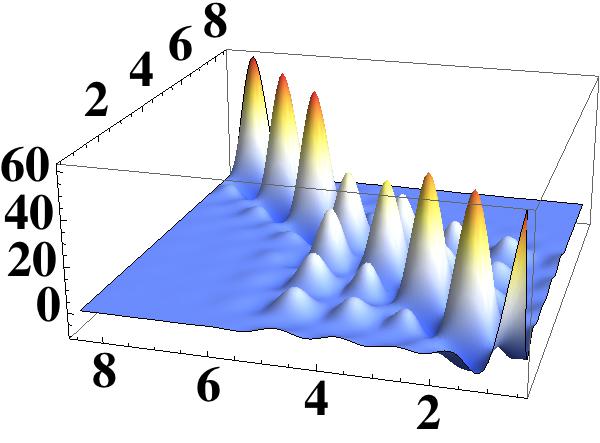}
        \caption{$L=8$}
    \end{subfigure}
    ~
    \begin{subfigure}[b]{0.23\textwidth}
        \includegraphics[width=0.95\textwidth]{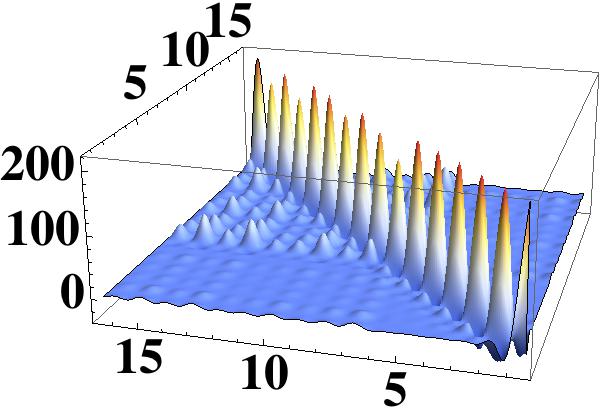}
        \caption{$L=16$}
    \end{subfigure}
    ~
    \begin{subfigure}[b]{0.23\textwidth}
        \includegraphics[width=0.95\textwidth]{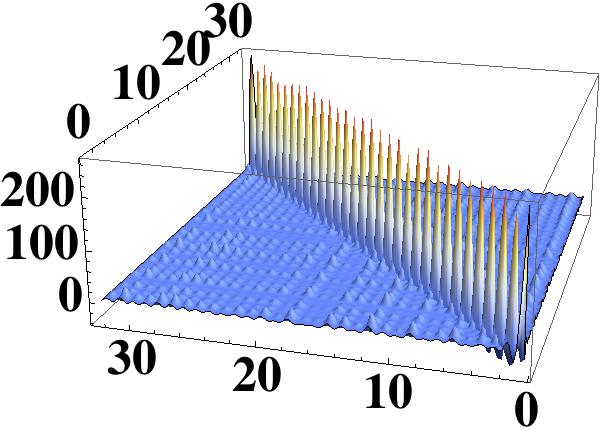}
        \caption{$L=32$}
    \end{subfigure}
    \caption{Example for the magnitude of the elements of $G$ for different dimensions (i.e., different values of $L$), for P=10. The graphs depict a single realization for each $L$, and were interpolated for ease of visualization.}
    \label{fig-Gmatrix}
\end{figure}

Consider now the quadric form \eqref{equ-Optimal a vector in quadratic form} we wish to minimize. Any choice of $\b{a}$ that is not a unit vector, will add more than one element from the diagonal of $\b{G}$ to it. When $L$ is large, the off-diagonal elements have little effect on the function value compared to the diagonal elements. Therefore, intuitively, one would prefer to have as little as possible elements from the diagonal although the off-diagonal elements can reduce the function value. This will happen if we choose $\b{a}$ to be a unit vector.  In the reminder of this section, we make this argument formal.

\subsection{Minimization of the Quadratic form $f$}
The minimization function $f(\b{a})=\textbf{a}^T\textbf{G}\textbf{a}$ can be written as 

\begin{equation*}
\begin{aligned}
\b{a^TGa}&=\sum_{i=1}^{L}(1-P(\|\b{h}\|^2-h_i^2))a_i^2   -  2\sum_{i=1}^{L}\sum_{j=1}^{i-1}Ph_ih_ja_ia_j \\
	       &=\|\b{a}\|^2 + P\sum_{i=1}^{L}\sum_{j=1}^{i-1} (h_ia_j-h_ja_i)^2\\
	       &=\|\b{a}\|^2 + P(\|\b{a}\|^2\|\b{h}\|^2-(\b{a}^T\b{h})^2).
\end{aligned}
\end{equation*}

Note that the right term consists of all possible pairs $(i,j)$ such that $i \neq j$, a total of $\frac{L(L-1)}{2}$ elements.  

We wish to understand when will a relay prefer a unit vector over any other non-trivial vector $\b{a}$. Specifically, since $\b{a}$ is a function of the random channel $\b{h}$, we will compute the probability of having a unit vector as the minimizer of $f$ for a given $\b{a}$. Or, alternatively, the probability that a certain non-trivial $\b{a}$ will minimize $f$ compared to a unit vector. We thus wish to find the probability

\small
\begin{equation}\label{equ-Probability of choosing a unit vector}
\begin{aligned}
	&Pr( f(\b{a}) \leq f(e_i))\\
&=Pr( \|\b{a}\|^2 + P(\|\b{a}\|^2\|\b{h}\|^2-(\b{a}^T\b{h})^2) \leq 1+P(\|\b{h}\|^2-h_i^2)),
\end{aligned}
\end{equation}
\normalsize

where $e_i$ is a unit vector of size $L$ with $1$ at the $i$-th entry and zero elsewhere, and $\b{a}$ is \emph{any} integer valued vector that is \emph{not} a unit vector. Note that \eqref{equ-Probability of choosing a unit vector} refers to any integer vector $\b{a}$, including the vectors in the search domain such that $\|\textbf{a}\|\leq\sqrt{1+P\|\textbf{h}\|^2}$ (\cite[Lemma 1]{nazer2011compute}). 

Note also that the right and left hand sides of the inequality in equation \eqref{equ-Probability of choosing a unit vector} are dependent, hence direct computation of this probability is not trivial. Still, this probability can be evaluated exactly noting that the angle between $\b{a}$ and $\b{h}$ is what mainly affects it. The details are in the theorem below.

\subsection{The Optimality of $e_i$ VS. a Certain Vector $\b{a}$}

\begin{theorem}\label{the-Probability for having a unit vector as the maximaizer}
Under the CF scheme, the probability that a non-trivial vector $\b{a}$ will be the coefficient vector $\b{a}^{opt}$ which maximize the achievable rate $\mathcal{R}(\b{h},\b{a}^{opt})$, i.e., minimize $f(\b{a}^{opt})$,  comparing with a unit vector $e_i$, is upper bounded by
\begin{equation}\label{equ-Probability of unit vector as minimizer of f}
Pr( f(\b{a}) \leq f(e_i)) \leq 1- I_{\Phi(\b{a})}\left(\frac{1}{2},\frac{L-1}{2}\right)
\end{equation}
where $I_x(a,b)$ is the CDF of the Beta distribution with parameters $a$ and $b$, and $\Phi(\b{a})=1-\frac{1}{\|a\|^2}$. Note that $\frac{1}{2} \leq \Phi(\b{a}) \leq 1$ for any $\b{a}$ which is not a unit vector.
\end{theorem}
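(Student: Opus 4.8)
The plan is to turn the coupled inequality \eqref{equ-Probability of choosing a unit vector} into a clean event that depends only on the \emph{direction} of $\b{h}$, and then identify the law of that direction. First I would evaluate $f(e_i)$ directly from the quadratic form: since $\|e_i\|^2=1$ and $e_i^T\b{h}=h_i$, we get $f(e_i)=1+P(\|\b{h}\|^2-h_i^2)$, which is exactly the right-hand side of \eqref{equ-Probability of choosing a unit vector}. Writing out $f(\b{a})\le f(e_i)$ and collecting the terms that multiply $(1+P\|\b{h}\|^2)$, the event is equivalent to
\begin{equation*}
(\|\b{a}\|^2-1)(1+P\|\b{h}\|^2)\le P\big((\b{a}^T\b{h})^2-h_i^2\big).
\end{equation*}

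Next I would introduce the angle $\theta$ between the fixed integer vector $\b{a}$ and the random $\b{h}$, so that $(\b{a}^T\b{h})^2=\|\b{a}\|^2\|\b{h}\|^2\cos^2\theta$. The upper bound is then obtained by two monotone relaxations that only enlarge the event. Since $-Ph_i^2\le 0$, the event implies $(\|\b{a}\|^2-1)(1+P\|\b{h}\|^2)\le P(\b{a}^T\b{h})^2$, i.e.
\begin{equation*}
\cos^2\theta \ge \frac{(\|\b{a}\|^2-1)(1+P\|\b{h}\|^2)}{P\|\b{a}\|^2\|\b{h}\|^2}=\Phi(\b{a})\left(1+\frac{1}{P\|\b{h}\|^2}\right).
\end{equation*}
Because the factor $1+\tfrac{1}{P\|\b{h}\|^2}\ge 1$ holds for every realization, this event is in turn contained in $\{\cos^2\theta\ge\Phi(\b{a})\}$, whose threshold is now deterministic. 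Hence $Pr(f(\b{a})\le f(e_i))\le Pr(\cos^2\theta\ge\Phi(\b{a}))$.

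Finally I would compute the marginal law of $\cos^2\theta$. Because $\b{h}\sim\mathcal{N}(\b{0},\b{I})$ is rotationally invariant, its direction $\b{h}/\|\b{h}\|$ is uniform on the unit sphere $S^{L-1}$; rotating coordinates so that $\b{a}$ aligns with the first axis gives $\cos^2\theta=h_1^2/\|\b{h}\|^2$. Writing $X=h_1^2\sim\chi^2_1$ and $Y=\|\b{h}\|^2-h_1^2\sim\chi^2_{L-1}$, which are independent, this is the ratio $X/(X+Y)$, and hence $\cos^2\theta\sim\mathrm{Beta}(\tfrac12,\tfrac{L-1}{2})$. Therefore $Pr(\cos^2\theta\ge\Phi(\b{a}))=1-I_{\Phi(\b{a})}(\tfrac12,\tfrac{L-1}{2})$, and since any non-unit integer vector has $\|\b{a}\|^2\ge 2$ we indeed get $\tfrac12\le\Phi(\b{a})<1$, proving \eqref{equ-Probability of unit vector as minimizer of f}. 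The main obstacle is the first step: the two sides of \eqref{equ-Probability of choosing a unit vector} share the same random $\b{h}$, so the probability cannot be read off directly. The key is that both relaxations preserve the inequality direction while stripping away the magnitude $\|\b{h}\|$ and the coordinate term $h_i^2$, leaving a bound whose only genuine probabilistic content is the rotation-invariant distribution of the squared cosine between $\b{a}$ and $\b{h}$.
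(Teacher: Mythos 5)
Your proof is correct and takes essentially the same route as the paper: your two monotone relaxations (dropping $-Ph_i^2$ and then the factor $1+\tfrac{1}{P\|\b{h}\|^2}\geq 1$) are exactly the paper's single step of discarding the nonpositive terms $\tfrac{1-\|\b{a}\|^2}{P}$ and $-h_i^2$, and both reduce the event to $\{\cos^2\theta\geq\Phi(\b{a})\}$. Your identification of the law of $\cos^2\theta$ as $\mathrm{Beta}\left(\tfrac{1}{2},\tfrac{L-1}{2}\right)$ via rotation invariance and the independent $\chi^2_1/(\chi^2_1+\chi^2_{L-1})$ ratio is precisely the paper's Lemma~\ref{lem-distribution of the squared cosine of the phase}.
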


In the context of this work, the main consequence of Theorem \ref{the-Probability for having a unit vector as the maximaizer}, is the following. 

\begin{corollary}\label{cor-Probability for having a unit vector as the maximaizer goes to one}
 As the number of simultaneously transmitting users grows, the probability that a non-trivial $\b{a}$ will be the maximizer for the achievable rate goes to zero. Specifically, 
 \begin{equation}\label{equ-Probability of choosing a unit vector goes to one}
Pr( f(\b{a}) \leq f(e_i)) \leq e^{-LE_1(L)},
\end{equation}
where $\b{a}$ is any integer vector that is \emph{\textbf{not}} a unit vector and  $E_1(L)=(1-\frac{3}{L})\log{\|\b{a}\|}$.
\end{corollary}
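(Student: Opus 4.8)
The plan is to derive the Corollary directly from Theorem~\ref{the-Probability for having a unit vector as the maximaizer} by taking the bound $1 - I_{\Phi(\b{a})}(\tfrac12,\tfrac{L-1}{2})$ and controlling its decay in $L$. The first step is to rewrite the bound in a more tractable form using the reflection symmetry of the regularized incomplete Beta function, $1 - I_x(a,b) = I_{1-x}(b,a)$. Since $\Phi(\b{a}) = 1 - \tfrac{1}{\|\b{a}\|^2}$, this turns the right-hand side of \eqref{equ-Probability of unit vector as minimizer of f} into $I_{1/\|\b{a}\|^2}\!\left(\tfrac{L-1}{2},\tfrac12\right)$, i.e.\ the left tail of a $\mathrm{Beta}(\tfrac{L-1}{2},\tfrac12)$ law evaluated at the small point $1/\|\b{a}\|^2$. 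Because $\b{a}$ is not a unit vector we have $\|\b{a}\|^2 \geq 2$, so this evaluation point is at most $\tfrac12$, which is exactly what makes the tail small.

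Next I would bound the incomplete Beta integral
\[
I_{1/\|\b{a}\|^2}\!\left(\tfrac{L-1}{2},\tfrac12\right) = \frac{1}{B\!\left(\tfrac{L-1}{2},\tfrac12\right)} \int_0^{1/\|\b{a}\|^2} t^{(L-3)/2}(1-t)^{-1/2}\,dt.
\]
On the interval $[0, 1/\|\b{a}\|^2] \subseteq [0,\tfrac12]$ the factor $(1-t)^{-1/2}$ is increasing and bounded above by $\sqrt{2}$, so it can be pulled out of the integral; integrating the remaining power of $t$ yields a bound of order $\tfrac{1}{L}\,\|\b{a}\|^{-(L-1)}$, up to the Beta-function normalization. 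The dominant factor $\|\b{a}\|^{-(L-1)} = e^{-(L-1)\log\|\b{a}\|}$ already carries the desired exponential decay in $L$.

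It then remains to absorb the polynomial prefactors. The normalization $1/B(\tfrac{L-1}{2},\tfrac12) = \Gamma(\tfrac{L}{2})/(\sqrt{\pi}\,\Gamma(\tfrac{L-1}{2}))$ grows like $\sqrt{L}$, which I would control with a standard Gamma-ratio (Gautschi/Stirling) estimate; combined with the $1/L$ coming from the integration this leaves an overall factor that decays like $1/\sqrt{L}$. Since $\|\b{a}\|^2 \geq 2$ and this decaying factor is eventually at most $\|\b{a}\|^2$, the two extra powers of $\|\b{a}\|$ can be traded off to replace the exponent $L-1$ by $L-3$, giving
\[
Pr(f(\b{a}) \leq f(e_i)) \leq \|\b{a}\|^{-(L-3)} = e^{-(L-3)\log\|\b{a}\|} = e^{-LE_1(L)},
\]
with $E_1(L) = \left(1-\tfrac{3}{L}\right)\log\|\b{a}\|$, as claimed.

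The main obstacle I expect is the bookkeeping in the last step: tracking the exact polynomial constants arising from both the integral estimate and the Gamma-function ratio, and verifying that their product is genuinely dominated by $\|\b{a}\|^2$ so that the shift from $L-1$ to $L-3$ is justified. This is precisely where the asymptotic (large-$L$) nature of the Corollary is used, since the trade-off only needs to hold for $L$ sufficiently large, matching the ``as the number of users grows'' hypothesis.
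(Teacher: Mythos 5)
Your proposal is correct, but it takes a genuinely different route from the paper. The paper never manipulates the incomplete Beta function analytically: instead it proves a separate stochastic-comparison result (Lemma~\ref{lem-lower bound for the distribution of the squared cosine of the phase}), showing $Pr\bigl(\tfrac{(\b{a}^T\b{h})^2}{\|\b{a}\|^2\|\b{h}\|^2}\le\alpha\bigr)\ge 1-(1-\alpha)^{\floor*{L/2}-1}$ by pairing the squared Gaussians into exponential random variables, so that the ratio is distributed as the minimum of $\floor*{\frac{L}{2}}-1$ i.i.d.\ uniforms (a Poisson-arrival argument); plugging in $\alpha=1-\frac{1}{\|\b{a}\|^2}$ then gives $\bigl(\frac{1}{\|\b{a}\|^2}\bigr)^{\floor*{L/2}-1}\le\|\b{a}\|^{-(L-3)}=e^{-LE_1(L)}$ directly, with no special-function estimates. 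You instead start from the bound of Theorem~\ref{the-Probability for having a unit vector as the maximaizer}, apply the reflection identity $1-I_x(a,b)=I_{1-x}(b,a)$, and estimate the left Beta tail $I_{1/\|\b{a}\|^2}\bigl(\tfrac{L-1}{2},\tfrac12\bigr)$ by pulling out $(1-t)^{-1/2}\le\sqrt{2}$ on $[0,\tfrac12]$ and controlling the normalization with a Gamma-ratio bound. Your bookkeeping does close: Wendel's inequality gives $\Gamma(\tfrac{L}{2})\le\bigl(\tfrac{L-1}{2}\bigr)^{1/2}\Gamma(\tfrac{L-1}{2})$, and the pieces combine to $\frac{2}{\sqrt{\pi(L-1)}}\,\|\b{a}\|^{-(L-1)}$, whose prefactor is below $2\le\|\b{a}\|^2$ already for every $L\ge 2$, so the trade of two powers of $\|\b{a}\|$ needs no ``sufficiently large $L$'' caveat at all. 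As for what each approach buys: the paper's argument is elementary and self-contained (a closed-form CDF, no Gamma asymptotics), and its uniform-minimum lemma is reusable elsewhere; yours is marginally sharper, since it retains an extra $\Theta(L^{-1/2})$ factor that the paper discards, and it establishes the Corollary as a direct consequence of Theorem~\ref{the-Probability for having a unit vector as the maximaizer} alone, whereas the paper bypasses the Beta CDF of that theorem and rederives the tail bound from Lemma~\ref{lem-distribution of the squared cosine of the phase}.
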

The proofs will be given after the following discussion.

\subsubsection{Discussion and simulation results}
 Corollary \ref{cor-Probability for having a unit vector as the maximaizer goes to one} clarifies that for every $P$, as the number of users grows, the probability of having a non-trivial vector $\b{a}$ as the maximizer of the achievable rate is going to 0. Note that the assumption of $L>3$, which arises naturally form this paper's regime, along with the fact that $\|\b{a}\|\geq2$ grantees that $E_1(L)$ is positive.  Figure \ref{fig-Probability_for_having_unit_vector} depicts the probability in \eqref{equ-Probability of unit vector as minimizer of f}, it's upper bound given in equation \eqref{equ-Probability of choosing a unit vector goes to one} and simulation results. From the analytic results as well as the simulations on the rate of decay, one can deduce that even for relatively small values of simultaneously transmitting users ($L>20$), the relay will prefer to choose a unit vector. Also, one can observe from the results and from the analytic bound that as the norm of $\b{a}$ grows, the rate of decay increases. This faster decay reflects the increased penalty of approximating a real vector using an integer valued vector.

\subsubsection{Proofs} 
The proof of Theorem \ref{the-Probability for having a unit vector as the maximaizer} is based on the lemma below.
\begin{lemma}\label{lem-distribution of the squared cosine of the phase}
The distribution of $\frac{(\b{a}^T\b{h})^2}{\|\b{a}\|^2\|\b{h}\|^2}$, which is the squared cosine of the angle between an integer vector $\b{a}$ and a standard normal vector $\b{h}$, both of dimension L, is $Beta(\frac{1}{2},\frac{L-1}{2})$.
\end{lemma}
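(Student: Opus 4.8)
The plan is to exploit the rotational (orthogonal) invariance of the standard normal vector $\b{h}$. The quantity $\frac{(\b{a}^T\b{h})^2}{\|\b{a}\|^2\|\b{h}\|^2}$ is precisely the squared cosine of the angle between $\b{a}$ and $\b{h}$, so it is unchanged if we apply the same orthogonal transformation to both vectors, and it is also unchanged by positive scaling of $\b{a}$. Treating $\b{a}$ as a fixed nonzero vector (the randomness is entirely in $\b{h}$), I would first pick an orthogonal matrix $Q$ that rotates $\b{a}/\|\b{a}\|$ onto the first standard basis vector $e_1$. Since $\b{h}\sim\mathcal{N}(\b{0},\b{I})$ is isotropic, $Q\b{h}$ has the same distribution as $\b{h}$, and the squared cosine is invariant under $Q$; hence without loss of generality I may replace $\b{a}$ by $e_1$, and the distribution in question depends only on the dimension $L$, not on the particular $\b{a}$.

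With this reduction the squared cosine collapses to $\frac{h_1^2}{\|\b{h}\|^2}=\frac{h_1^2}{h_1^2+\sum_{i=2}^{L}h_i^2}$. The next step is to identify the two independent chi-squared ingredients: $h_1^2\sim\chi^2_1$ and $\sum_{i=2}^{L}h_i^2\sim\chi^2_{L-1}$, and these are independent since they are built from disjoint sets of independent Gaussian coordinates.

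Finally, I would invoke the standard Gamma--Beta relationship: if $X\sim\mathrm{Gamma}(\alpha,\theta)$ and $Y\sim\mathrm{Gamma}(\beta,\theta)$ are independent, then $\frac{X}{X+Y}\sim\mathrm{Beta}(\alpha,\beta)$. Writing $\chi^2_k=\mathrm{Gamma}(k/2,2)$ and taking $\alpha=\tfrac12$, $\beta=\tfrac{L-1}{2}$ yields $\frac{h_1^2}{h_1^2+\sum_{i\geq2}h_i^2}\sim\mathrm{Beta}\!\left(\tfrac12,\tfrac{L-1}{2}\right)$, which is exactly the claim.

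I expect the only genuinely delicate point to be the justification that the distribution does not depend on $\b{a}$ — that is, the rotational-invariance ``without loss of generality'' step, where one must be careful that the orthogonal change of variables preserves both the law of $\b{h}$ and the value of the cosine. Once $\b{a}$ is rotated to $e_1$, the remainder is a textbook ratio of independent chi-squared variables and the Beta identification is immediate.
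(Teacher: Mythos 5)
Your proposal is correct and follows essentially the same route as the paper's proof: rotate $\b{a}$ onto a coordinate axis via an orthogonal matrix under which $\b{h}$ remains standard normal, reduce the squared cosine to $h_1^2/\|\b{h}\|^2$, and identify this ratio of independent $\chi^2_1$ and $\chi^2_{L-1}$ variables as $\mathrm{Beta}\left(\tfrac{1}{2},\tfrac{L-1}{2}\right)$. The only cosmetic difference is that you invoke isotropy to replace $\b{a}$ by $e_1$ in distribution, while the paper verifies the invariance of the cosine under $\b{Q}$ by an explicit chain of algebraic equalities --- the substance is identical.
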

\begin{proof}
Let $\b{Q}$ be an orthogonal rotation matrix such that $\b{Q}\b{a}=\b{a'}$ where $\b{a'}$, is co-linear to the basis vector $e_1$. That is, $\b{a'}=(\|\b{a}\|,0,...,0)$. Define $\b{h'}=\b{Q}\b{h}$. Note that $\b{h'}$ is a standard normal vector since $E[\b{h'}]=E[\b{Q}\b{h}]=0$, and $\b{QIQ}^T=\b{QQ}^T=\b{I}$. We have
\begin{equation}
\begin{aligned}
	&\frac{(\b{a}^T\b{h})^2}{\|\b{a}\|^2\|\b{h}\|^2}=\frac{(\b{a}^T\b{h})^2}{(\b{a}^T\b{a})(\b{h}^T\b{h})}=\frac{(\b{a}^T\b{Q}^T\b{Qh})^2}{(\b{a}^T\b{Q}^T\b{Qa})(\b{h}^T\b{Q}^T\b{Qh})}\\
	&=\frac{((\b{Qa})^T\b{Qh})^2}{((\b{Qa})^T\b{Qa})((\b{Qh})^T\b{Qh})}=\frac{((\b{Qa})^T\b{Qh})^2}{\|\b{Qa}\|^2\|\b{Qh}\|^2}\\
	&=\frac{\|\b{a'}\|^2(\b{e_1}^T\b{h'})^2}{\|\b{a'}\|^2\|\b{e_1}\|^2\|\b{h'}\|^2}=\frac{(\b{e_1}^T\b{h'})^2}{\|\b{e_1}\|^2\|\b{h'}\|^2}=\frac{{h'}_1^2}{\|\b{h'}\|^2}.
\end{aligned}
\end{equation}

Considering the above we have the equality of  $\cos^2{\theta}=\frac{{h'}_1^2}{{h'}_1^2+{h'}_2^2+...+{h'}_L^2}$. This expression can be represented as $\frac{W}{W+V}$, where $W={h'}_1^2$ is a $\chi^2_1$ r.v. and $V=\sum_{i=2}^L{h'}_i^2$ is a $\chi^2_{L-1}$ r.v. independent in $W$. This ratio has a $Beta(a,b)$ distribution, with $a=\frac{1}{2}$ and $b=\frac{L-1}{2}$. Note that $a$ and $b$ correspond to the degrees of freedom of $W$ and $V$.

\end{proof}

\begin{figure}[t]
\centering
    \includegraphics[width=0.5\textwidth]{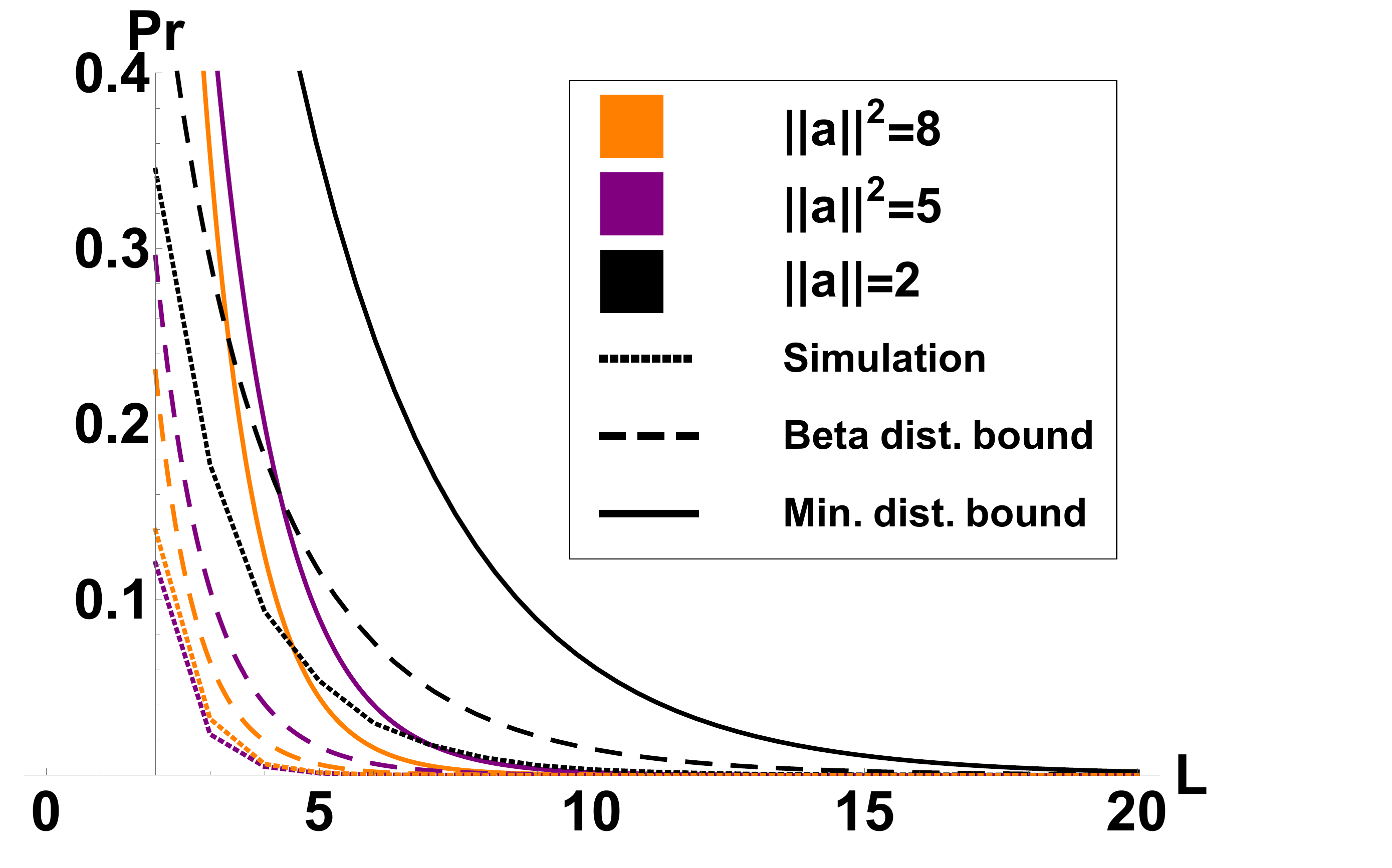}
\caption{The upper bounds given in \eqref{equ-Probability of choosing a unit vector goes to one}(solid lines), \eqref{equ-Probability of unit vector as minimizer of f} (dashed lines)  and simulation results (dotted lines) for not having a unit vector as the minimizer of $f$ compared to various values of $\|\b{a}\|^2$ as a function of simultaneously transmitting users.}
\label{fig-Probability_for_having_unit_vector}
\end{figure}

\begin{myproof} [Proof of Theorem \ref{the-Probability for having a unit vector as the maximaizer}]
According to equation \eqref{equ-Probability of choosing a unit vector} we have,
\small
\begin{equation}
\begin{aligned}
	&Pr( f(\b{a}) \leq f(e_i))\\
	&=Pr\left(\|\b{a}\|^2 +P\left(\|\b{a}\|^2\|\b{h}\|^2-(\b{a}^T\b{h})^2\right)  \leq 
	 1+P(\|\b{h}\|^2-h_i^2) \right)\\
	&=Pr\left(\frac{1-\|\b{a}\|^2}{P} + \|\b{h}\|^2 - h_i^2 - \|\b{a}\|^2\|\b{h}\|^2 + (\b{a}^T\b{h})^2 \geq 0 \right)\\
	&\overset{(a)}{\leq} Pr\left(\|\b{h}\|^2 - \|\b{a}\|^2\|\b{h}\|^2 + (\b{a}^T\b{h})^2 \geq 0 \right)\\
	&=Pr\left(\frac{1}{\|\b{a}\|^2} - 1 + \frac{(\b{a}^T\b{h})^2}{\|\b{a}\|^2\|\b{h}\|^2} \geq 0 \right)\\
	&=Pr\left(\frac{(\b{a}^T\b{h})^2}{\|\b{a}\|^2\|\b{h}\|^2} \geq 1- \frac{1}{\|\b{a}\|^2} \right)\\
	&\overset{(b)}{=}1- I_{\Phi(\b{a})}\left(\frac{1}{2},\frac{L-1}{2}\right),
\end{aligned}
\end{equation}
\normalsize
where $(a)$ follows since we removed negative terms ($\|\b{a}\|^2>1$) and $(b)$ follows from Lemma \ref{lem-distribution of the squared cosine of the phase} with $\Phi(\b{a})=1- \frac{1}{\|\b{a}\|^2}$.
\end{myproof}

The bound on the probability given in Theorem \ref{the-Probability for having a unit vector as the maximaizer} consists of a complicated analytic function $I_{\Phi(\b{a})}(\cdot)$. Hence, corollary \ref{cor-Probability for having a unit vector as the maximaizer goes to one} includes a simplified bound which avoids the use of $I_{\Phi(\b{a})}(\cdot)$, yet keeps the nature of the result in Theorem \ref{the-Probability for having a unit vector as the maximaizer}. The proof of Corollary \ref{cor-Probability for having a unit vector as the maximaizer goes to one} is based on the following lemma.

\begin{lemma}\label{lem-lower bound for the distribution of the squared cosine of the phase}
The CDF of $\frac{(\b{a}^T\b{h})^2}{\|\b{a}\|^2\|\b{h}\|^2}$ 
can be lower bounded by the CDF of the minimum of $\left(\floor*{\frac{L}{2}}-1\right)$ i.i.d. uniform random variables in $[0,1]$.
\end{lemma}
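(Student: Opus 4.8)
The plan is to reduce the statement to a comparison of two Beta laws and then settle it by a monotone-likelihood-ratio argument. By Lemma~\ref{lem-distribution of the squared cosine of the phase}, the variable $\frac{(\b{a}^T\b{h})^2}{\|\b{a}\|^2\|\b{h}\|^2}$ has a $Beta(\frac12,\frac{L-1}{2})$ distribution, so its CDF is $I_x(\frac12,\frac{L-1}{2})$. On the other hand, the minimum of $k\triangleq\floor*{\frac{L}{2}}-1$ i.i.d. $U[0,1]$ variables is exactly a $Beta(1,k)$ variable, whose CDF is the elementary function $1-(1-x)^k$. Hence the lemma is equivalent to the first-order stochastic dominance statement $I_x(\frac12,\frac{L-1}{2})\ge 1-(1-x)^k$ for all $x\in[0,1]$, i.e. that the $Beta(\frac12,\frac{L-1}{2})$ law is stochastically \emph{smaller} than the $Beta(1,k)$ law.

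First I would prove this dominance through the likelihood-ratio order, which implies the usual stochastic order. Writing $f_1$ and $f_2$ for the densities of $Beta(\frac12,\frac{L-1}{2})$ and $Beta(1,k)$ respectively, the ratio satisfies
\begin{equation*}
\frac{f_2(t)}{f_1(t)} \propto t^{1/2}\,(1-t)^{\,k-\frac{L-1}{2}},\qquad t\in(0,1).
\end{equation*}
The factor $t^{1/2}$ is non-decreasing, and the factor $(1-t)^{\,k-(L-1)/2}$ is non-decreasing precisely when its exponent is non-positive, i.e. when $k\le\frac{L-1}{2}$. Under this condition the whole ratio is non-decreasing in $t$, so $Beta(1,k)$ dominates $Beta(\frac12,\frac{L-1}{2})$ in the likelihood-ratio order, hence in the usual stochastic order, which is exactly the desired CDF inequality.

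It then remains to check the parameter inequality $k=\floor*{\frac{L}{2}}-1\le\frac{L-1}{2}$. I would verify this by parity: for $L=2m$ it reads $m-1\le m-\tfrac12$, and for $L=2m+1$ it reads $m-1\le m$, both of which hold; in fact the exponent $k-\frac{L-1}{2}$ equals $-\tfrac12$ for even $L$ and $-1$ for odd $L$, so it is strictly negative for every $L$ and the monotonicity is strict. This is exactly where the choice $\floor*{\frac{L}{2}}-1$ is used, and it is tuned so that the resulting exponent satisfies $2k\ge L-3$, which is what makes the later Corollary's bound $e^{-LE_1(L)}$ come out with $E_1(L)=(1-\tfrac{3}{L})\log\|\b{a}\|$.

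The only real obstacle is selecting the right comparison mechanism. A tempting alternative is a direct coupling: pair the coordinates $h_2,\dots,h_L$ into $k$ independent $\chi^2_2$ blocks $S_j$ and use $\frac{(\b{a}^T\b{h})^2}{\|\b{a}\|^2\|\b{h}\|^2}\le\frac{h_1^2}{h_1^2+S_j}$ for each $j$. Conditioning on $h_1$ makes the resulting events independent and gives $Pr(\cos^2\theta>x)\le E\big[q(h_1)^k\big]$ with $q(\eta)=Pr(S_1<\tfrac{1-x}{x}\eta^2)$, but bounding $E[q^k]$ by $(1-x)^k$ fails because Jensen's inequality pushes in the wrong direction. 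The likelihood-ratio route sidesteps this gap entirely; the only care needed is to keep the direction of the stochastic order straight, namely that a larger CDF corresponds to a stochastically smaller variable.
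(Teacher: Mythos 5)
Your proof is correct, but it takes a genuinely different route from the paper. The paper's proof never compares two Beta laws abstractly: it starts from $\frac{h_1^2}{\|\b{h}\|^2}$ (via Lemma~\ref{lem-distribution of the squared cosine of the phase}), uses the pointwise enlargement $\frac{h_1^2}{\|\b{h}\|^2}\le\frac{h_1^2+h_2^2}{\|\b{h}\|^2}$ to lower bound the CDF, and then identifies the enlarged ratio \emph{exactly}: pairing the Gaussian coordinates into $\chi^2_2$ (exponential) blocks writes it as $\frac{W}{W+V}$ with $W$ exponential and $V$ a sum of $\frac{L}{2}-1$ i.i.d.\ exponentials, which is precisely the minimum of $\frac{L}{2}-1$ uniforms by the Poisson-arrivals interpretation (odd $L$ is handled by discarding $h_L^2$ from the denominator). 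You instead recognize both distributions as Beta ($Beta(\frac12,\frac{L-1}{2})$ versus $Beta(1,k)$ with $k=\floor*{\frac{L}{2}}-1$), reduce the lemma to first-order stochastic dominance, and settle it by the monotone-likelihood-ratio order, with the parity analysis collapsing to the sign check $k-\frac{L-1}{2}\in\{-\frac12,-1\}$. Your route buys a calculation-light, self-contained argument that needs no order-statistics fact and makes the slack in the choice $\floor*{\frac{L}{2}}-1$ explicit; the paper's route is more elementary-probabilistic and yields the exact law of the dominating variable rather than dominance alone. One remark on your closing aside: the coupling you dismiss (conditioning on $h_1$ and multiplying $k$ block events) is indeed a dead end, but it is not the paper's coupling --- the paper avoids conditioning entirely by inflating the numerator \emph{before} comparing, which is why its coupling succeeds where the one you tried fails. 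Minor shared caveat, not a gap: both arguments implicitly need $k\ge 1$, i.e.\ $L\ge 4$, consistent with the paper's stated regime $L>3$.
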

\begin{proof}
We start by assuming that $L$ is even where the case of odd $L$ will be dealt with later. From Lemma \ref{lem-distribution of the squared cosine of the phase}, the r.v. $\frac{(\b{a}^T\b{h})^2}{\|\b{a}\|^2\|\b{h}\|^2}$ has the same distribution as $\frac{h_1^2}{\|\b{h}\|^2}$, that is, for any $0 \leq \alpha \leq 1$,
\small
\begin{align*}
	&Pr\left(\frac{(\b{a}^T\b{h})^2}{\|\b{a}\|^2\|\b{h}\|^2} \leq \alpha \right)\\
	&=Pr\left(\frac{h_1^2}{\|\b{h}\|^2} \leq \alpha \right)\\
	&\overset{(a)}{\geq} Pr\left(\frac{h_1^2 + h_2^2}{\|\b{h}\|^2} \leq \alpha \right)\\
	&= Pr\left(\frac{h_1^2 + h_2^2}{(h_1^2 + h_2^2)+...+(h_{L-1}^2 + h_L^2)} \leq \alpha \right)\\
	&\overset{(b)}{=} 1-(1-\alpha)^{\frac{L}{2}-1},
\end{align*}
\normalsize
$(a)$ is true since a larger r.v. will yield lower probability. $(b)$ is due to the observation that $\frac{h_1^2 + h_2^2}{\|\b{h}\|^2}$ can be represented as $\frac{W}{W+V}$, where $W=h_1^2 + h_2^2$ and $V=\sum_{i=3}^{L} h_i^2$ are independent exponential r.vs. Note that $V$ is essentially a sum of $\frac{L}{2}-1$ independent pairs. This ratio is distributed as the minimum of $\left(\frac{L}{2}-1\right)$ i.i.d. uniform $[0,1]$ random variables \cite[Lemma 3.2]{jagannathan2006efficient}. This is since the ratio can be interpreted as the proportion of the waiting time for the first arrival to the $\frac{L}{2}$ arrival of a Poisson process. 


In case $L$ is an odd number, we can increase the term in the proof by replacing it with $\frac{h_1^2 + h_2^2}{\|\b{h}\|^2 - h_L^2}$, resulting in a distribution which is similar to the minimum of $(\frac{L-1}{2}-1)$ i.i.d. uniform random variables in the same manner.
\end{proof} 

\begin{myproof} [Proof of Corollary \ref{cor-Probability for having a unit vector as the maximaizer goes to one}]
\small
\begin{equation}\label{equ-upper bound on the probability for not having a unit vector}
\begin{aligned}
	& Pr(f(\b{a}) \leq f(e_i))\\
	&\overset{(a)}{\leq} Pr\left(\frac{(\b{a}^T\b{h})^2}{\|\b{a}\|^2\|\b{h}\|^2} \geq 1- \frac{1}{\|\b{a}\|^2} \right)\\
	&\overset{(b)}{\leq} \left(1-\left(1- \frac{1}{\|\b{a}\|^2} \right) \right)^{\floor*{\frac{L}{2}}-1}\\
	&\leq   \left(\frac{1}{\|\b{a}\|^2} \right)^{\frac{L-1}{2}-1}\\
	&=   e^{-LE_1(L)},
\end{aligned}
\end{equation}
\normalsize

where $(a)$ and $(b)$ follows from Lemmas \ref{lem-distribution of the squared cosine of the phase} and \ref{lem-lower bound for the distribution of the squared cosine of the phase} respectively, $E_1(L)=(1-\frac{3}{L})\log{\|\b{a}\|}$ and $\|\b{a}\|^2>1$. 
\end{myproof}

The following lemma shows a simple property of the optimal coefficients vector which shows that if the relay is interested in only one transmitter, i.e. a unit vector as the optimal coefficients vector, it will be the transmitter with the strongest channel.

\begin{lemma}\label{lem-Position of maximal value in a vector}
For any channel vector $\b{h}$ of size $L$ with $\mathrm{m}=\argmax_i\{|h_i|\}$, the optimal coefficients vector $\b{a}^{opt}$ which maximize the rate $\mathcal{R}(\textbf{h},\textbf{a})$ has to satisfy $\mathrm{m}=\argmax_i\{|a_i|\}$ as well.
\end{lemma}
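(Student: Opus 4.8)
The plan is to exploit the fact, already visible in the paper's rewriting $f(\b{a})=\|\b{a}\|^2(1+P\|\b{h}\|^2)-P(\b{a}^T\b{h})^2$, that the only data entering $f$ are the norm $\|\b{a}\|^2$ and the inner product $\b{a}^T\b{h}$. Both $\|\b{a}\|^2$ and the multiset of magnitudes $\{|a_i|\}$ are invariant under permuting the entries of $\b{a}$ and under flipping their signs. This invariance is exactly what lets me compare $\b{a}^{opt}$ against a rearranged competitor of \emph{equal norm} and conclude by a swap argument, so that minimizing $f$ among vectors of a given magnitude profile reduces to maximizing $(\b{a}^T\b{h})^2$.

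First I would perform a sign alignment. I claim any minimizer may be taken with $a_i h_i\ge 0$ for every $i$: replacing $a_i$ by $|a_i|\,\mathrm{sign}(h_i)$ leaves $\|\b{a}\|$ unchanged while, by the triangle inequality, $|\b{a}^T\b{h}|=\left|\sum_i a_i h_i\right|\le \sum_i |a_i||h_i|$, so the aligned vector has $\b{a}^T\b{h}=\sum_i|a_i||h_i|$ and a value of $f$ no larger than that of $\b{a}^{opt}$. Since $\argmax_i|a_i|$ depends only on magnitudes, it suffices to prove the claim for this aligned version and then transfer it back (sign flips do not change $|a_i|$).

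Next comes the swap. Suppose, for contradiction, that $k:=\argmax_i|a_i^{opt}|\neq \mathrm{m}$, so $|a_k|>|a_m|$ while $|h_m|\ge|h_k|$ by definition of $\mathrm{m}$. I would form $\b{a}'$ by swapping the magnitudes at positions $k$ and $\mathrm{m}$ (keeping signs aligned to $\b{h}$) and leaving all other entries fixed; then $\|\b{a}'\|=\|\b{a}^{opt}\|$, and a direct computation gives $\b{a}'^T\b{h}-\b{a}^{opt\,T}\b{h}=(|a_k|-|a_m|)(|h_m|-|h_k|)\ge 0$, strictly positive once $|h_m|>|h_k|$. Because both inner products are nonnegative after alignment, this yields $(\b{a}'^T\b{h})^2>(\b{a}^{opt\,T}\b{h})^2$ and hence $f(\b{a}')<f(\b{a}^{opt})$, contradicting optimality. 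Thus the largest-magnitude entry of $\b{a}^{opt}$ must sit at the coordinate of the largest $|h_i|$, i.e. $\mathrm{m}=\argmax_i|a_i^{opt}|$. (This is just the local form of the rearrangement inequality, which one could alternatively invoke globally.)

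The step I expect to require the most care is not the swap itself but the edge cases. I would note that $\b{h}$ has distinct coordinate magnitudes, and hence a unique $\mathrm{m}$, almost surely since its entries are continuous; and I would add a word about ties in $\{|a_i^{opt}|\}$: if several entries attain the maximal magnitude, the swap argument still forces position $\mathrm{m}$ to carry one of them, so $\mathrm{m}\in\argmax_i|a_i^{opt}|$. The sign-alignment reduction also deserves a careful statement so the conclusion about $\argmax_i|a_i|$ transfers cleanly to the original, possibly non-aligned, minimizer.
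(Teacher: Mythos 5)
Your proposal is correct and follows essentially the same route as the paper's proof: a norm-preserving swap of the two entries, reducing the comparison to the inner product $\b{a}^T\b{h}$ after sign alignment with $\b{h}$. Your write-up is in fact tighter than the paper's, which asserts the sign-alignment property informally and ignores ties, while you make the alignment reduction explicit, compute the inner-product gain $(|a_k|-|a_m|)(|h_m|-|h_k|)$ directly, and correctly weaken the conclusion to $\mathrm{m}\in\argmax_i|a_i^{opt}|$ when the maximal magnitude of $\b{a}^{opt}$ is not unique.
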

\begin{proof}
Suppose that there exist $\b{h}$ for which (w.l.o.g.) $|h_1|>|h_i|$ for all $i\neq1$ and that the optimal coefficients vector $\b{a}^{opt}$ satisfies (w.l.o.g.) $|a_2|>|a_i|$ for all $i\neq2$. Considering the rate expression $\mathcal{R}(\textbf{h},\textbf{a})$ we will show that by rearranging $\b{a}^{opt}$ a higher rate can be attain. Let $\widehat{\b{a}}$ be a vector which is identical to $\b{a}^{opt}$ except the two first entries which are switched, i.e. $\widehat{a_1}=a_2^{opt}$ and $\widehat{a_2}=a_1^{opt}$. The values of the vectors are the same thus we have $\|\widehat{\b{a}}\|=\|\b{a}^{opt}\|$ and the only term affecting the rate is the scalar multiplication between $\b{a}^{opt}$ and $\b{h}$. We first note that the signs of $h_i$ and its corresponding optimal coefficient $a_i^{opt}$ has to be equal or different for all $i$. I.e. the the case which there exist $i$ such that $sign(h_i)=sign(a_i^{opt})$ and $j$ such that $sign(h_j) \neq sign(a_j^{opt})$ could not be possible. This is due to the fact that the optimal coefficients vector has to maximize the scalar multiplication. Therefore, considering this property, $h_1\widehat{a_1}+h_2\widehat{a_2}>h_1a_1^{opt}+h_2a_2^{opt}$. This means that the rate can be improved by choosing $\widehat{\b{a}}$ contradicting $\b{a}^{opt}$ optimality. Specifically, we'll get that as long as the maximal value in any $\b{a}$ is not in the same place as the maximal value in $\b{h}$ we can always improve the rate. 
\end{proof}

\subsection{The Optimality of $e_i$ VS. All Possible Vectors $\b{a}$}
Corollary \ref{cor-Probability for having a unit vector as the maximaizer goes to one} refers to the probability that a unit vector will minimize $f$ for a fixed $\b{a}$. Next we wish to explore this probability for any possible $\b{a}$. For the purpose of clarity, \eqref{equ-upper bound on the probability for not having a unit vector} gives an upper bound on the probability that a unit vector will not minimize $f$ compered to a certain possible integer coefficients vectors with a certain $\|\b{a}\|^2$. Where the probability of having an optimal vector which is \emph{not} a unit vector will be the union of all probabilities for each $\b{a}$ vector which satisfies $\|\b{a}\|^2< 1+P\|\b{h}\|^2$. Let us define $P(e_i)$ as the probability that a relay picked a unit vector as the coefficient vector, and $P(\overline{e_i})$ as the probability which any other vector was chosen.


In \cite{sahraei2014compute}, a polynomial time algorithm for finding the optimal coefficients vector $\b{a}$ was given. The complexity result derives from the fact that the \textit{cardinality of the set of all $\b{a}$ vectors} (denoted as $\Phi$)\footnote{In \cite{sahraei2014compute}, $\Phi$ is a set of points which the average of any consecutive points is mapped to a different coefficients vector.} which are considered is upper bounded by $2L(\ceil{\sqrt{1+P\|\b{h}\|^2}}+1)$. That is, any vector which does not exist in this set has zero probability to be the one which maximize the rate. We shell note this set here as $\mathcal{A}$. Thus, we wish to compute

\begin{equation}
P(\overline{e_i})= \bigcup_{\mathcal{A}} P \left(  f(\b{a}) \leq f(e_i) \right),
\end{equation}
where $\mathcal{A}= \{ \b{a} \in \mathbb{Z}^L : \b{a} \in \Phi, \ \b{a}\neq e_i \ \forall i \}$. Note that the cardinality of $\mathcal{A}$ grows with the dimension of $\b{h}$, i.e., with $L$ and can be easily upper bounded as follows,

\begin{multline}\label{equ-Cardinality of the search domain}
|\mathcal{A}| \leq 2L(\ceil{\sqrt{1+P\|\b{h}\|^2}}+1) \\
\leq 2L(\ceil{1+P\|\b{h}\|^2}+1) \leq  2L(P\|\b{h}\|^2+3).
\end{multline}

\begin{theorem}\label{the-Probability for having a unit vector as the maximaizer over all other vectors}
Under the CF scheme, the probability which any other coefficients vector $\b{a}$ will be chosen to maximize the achievable rate $\mathcal{R}(\b{h},\b{a}^{opt})$ compared with a unit vector $e_i$, as the number of simultaneously transmitting users grows, is zero. That is, 
\begin{equation}
\lim_{L \rightarrow \infty} P(\overline{e_i})= 0,
\end{equation}
\end{theorem}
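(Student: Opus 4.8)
The plan is to prove Theorem~\ref{the-Probability for having a unit vector as the maximaizer over all other vectors} by a union bound over the candidate set $\mathcal{A}$, controlling each term with Corollary~\ref{cor-Probability for having a unit vector as the maximaizer goes to one} and then showing that the polynomial growth of $|\mathcal{A}|$ is crushed by the exponential per-vector decay. First, I would write $P(\overline{e_i})$ as the probability of the union $\bigcup_{\b{a}\in\mathcal{A}}\{f(\b{a})\le f(e_i)\}$ (taking $e_i$ to be the best unit vector, so that the event "the optimum is not a unit vector" is contained in this union) and apply the standard union bound,
\begin{equation*}
P(\overline{e_i}) \le \sum_{\b{a}\in\mathcal{A}} Pr\bigl(f(\b{a})\le f(e_i)\bigr).
\end{equation*}

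Each summand is exactly the quantity bounded in Corollary~\ref{cor-Probability for having a unit vector as the maximaizer goes to one}, so it is at most $e^{-LE_1(L)}=\|\b{a}\|^{-(L-3)}$. Since any integer vector that is not a unit vector satisfies $\|\b{a}\|^2\ge 2$, every term is bounded uniformly by $2^{-(L-3)/2}$, the worst (largest) case occurring at the smallest admissible norm. Inserting the cardinality bound \eqref{equ-Cardinality of the search domain}, $|\mathcal{A}|\le 2L(P\|\b{h}\|^2+3)$, then yields
\begin{equation*}
P(\overline{e_i}) \le 2L\bigl(P\|\b{h}\|^2+3\bigr)\,2^{-(L-3)/2}.
\end{equation*}

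The remaining issue is that $\|\b{h}\|^2$ is itself random, $\|\b{h}\|^2\sim\chi^2_L$. I would handle this by conditioning on the concentration event $G=\{\|\b{h}\|^2\le 2L\}$: on $G$ the right-hand side is $O(L^2\,2^{-L/2})$, which tends to $0$, while $Pr(G^{c})$ vanishes by a standard $\chi^2$ tail bound. Combining the two pieces via $P(\overline{e_i})\le Pr(\overline{e_i}\cap G)+Pr(G^{c})$ gives the claimed limit $P(\overline{e_i})\to 0$ as $L\to\infty$.

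The step I expect to be the main obstacle is making this union bound fully rigorous, because both the index set $\mathcal{A}$ and the cardinality bound depend on the very same random vector $\b{h}$ that also governs the individual events $\{f(\b{a})\le f(e_i)\}$, whereas Corollary~\ref{cor-Probability for having a unit vector as the maximaizer goes to one} provides only the marginal (over $\b{h}$) per-vector probabilities. Conditioning on the high-probability event $G$ is what decouples these dependencies and lets the polynomial factor $2L(2PL+3)$ be absorbed by the exponential decay $2^{-(L-3)/2}$; the only care needed is to verify that the per-vector bound survives this conditioning with at most a constant-factor loss, so that the product bound continues to hold on $G$.
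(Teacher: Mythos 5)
Your proposal is correct and follows essentially the same route as the paper's own proof: a union bound over the candidate set $\mathcal{A}$, the per-vector bound $\bigl(1/\|\b{a}\|^2\bigr)^{\frac{L-1}{2}-1}=\|\b{a}\|^{-(L-3)}$ from Corollary~\ref{cor-Probability for having a unit vector as the maximaizer goes to one}, the worst case $\|\b{a}\|^2=2$, and the cardinality bound \eqref{equ-Cardinality of the search domain}, so that a polynomial factor is crushed by the exponential decay. The single point of divergence is the treatment of the random factor $\|\b{h}\|^2$ inside the cardinality bound: the paper multiplies and divides by $L$ and invokes the strong law of large numbers (steps $(c)$--$(d)$, replacing $\frac{1}{L}\sum_i h_i^2$ by its almost-sure limit $1$), whereas you condition on the concentration event $G=\{\|\b{h}\|^2\le 2L\}$ and add a $\chi^2$ tail bound via $P(\overline{e_i})\le P(\overline{e_i}\cap G)+P(G^c)$. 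Your variant is, if anything, cleaner: the paper's step $(d)$ applies an almost-sure limit to what should be a deterministic numerical bound on a probability, which your decomposition avoids. Regarding the obstacle you flag, note that the dependence of $\mathcal{A}$ on the same $\b{h}$ that drives the events $\{f(\b{a})\le f(e_i)\}$ is glossed over in the paper's proof as well, and your conditioning on $G$ bounds only the \emph{size} of $\mathcal{A}$, not its ($\b{h}$-dependent) membership, so summing marginal per-vector probabilities over it is still not literally licensed. A fully airtight version along your lines replaces $\mathcal{A}$ on the event $G$ by the deterministic superset $\{\b{a}\in\Z^L:\ 2\le\|\b{a}\|^2\le 1+2PL\}$ (Lemma~\ref{lem-Search domain for the vector coefficients}) and sums the marginal bounds stratified by norm: for any fixed $k_0>0$, vectors with $\|\b{a}\|^2=k\le k_0$ number polynomially in $L$ and each contributes at most $2^{-\frac{L-3}{2}}$, while the total number of integer points in the ball is at most $C_P^{L/2}$ for a constant $C_P$ depending only on $P$, so choosing $k_0>C_P$ makes the large-norm contribution at most $C_P^{L/2}k_0^{-\frac{L-3}{2}}\to 0$; the whole sum is then $O\bigl(\mathrm{poly}(L)\,2^{-L/2}\bigr)$ and your conclusion stands. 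Since the paper itself does not perform this decoupling, your proposal meets (and on the $\|\b{h}\|^2$ issue exceeds) the paper's own standard of rigor.
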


\begin{proof}
We have, 
\begin{align*}
&\lim_{L \rightarrow \infty}P(\overline{e_i})= \lim_{L \rightarrow \infty} \bigcup_{\mathcal{A}} P\left(f(\b{a}) \leq f(e_i)\right) \\
&\leq \lim_{L \rightarrow \infty} \sum_{\mathcal{A}}  P\left(f(\b{a}) \leq f(e_i)\right)\\
&\leq \lim_{L \rightarrow \infty} \sum_{\mathcal{A}} \left(\frac{1}{\|\b{a}\|^2} \right)^{\frac{L-1}{2}-1}\\
&\overset{(a)}{\leq} \lim_{L \rightarrow \infty} |\mathcal{A}| \left(\frac{1}{2} \right)^{\frac{L-1}{2}-1}\\
&\overset{(b)}{\leq} \lim_{L \rightarrow \infty} 2L(P\|\b{h}\|^2+3) \left(\frac{1}{2} \right)^{\frac{L-1}{2}-1}\\
&= \lim_{L \rightarrow \infty} 2L(P\sum_{i=1}^Lh_i^2+3) \left(\frac{1}{2} \right)^{\frac{L-1}{2}-1}\\
&\overset{(c)}{=} \lim_{L \rightarrow \infty} 2L^2P\left(\frac{1}{2} \right)^{\frac{L-1}{2}-1}  \frac{1}{L}\sum_{i=1}^Lh_i^2\\
&\overset{(d)}{=} \lim_{L \rightarrow \infty} 2L^2P\left(\frac{1}{2} \right)^{\frac{L-1}{2}-1} \\
&=4P \lim_{L \rightarrow \infty}  L^2 2^{-\frac{L-1}{2}} \\
&=4P\lim_{L \rightarrow \infty}  L^2 e^{-LE_2(L)} = 0,
\end{align*}
where $(a)$ is true since the term inside the sum is maximized with $\|\b{a}\|^2=2$. $(b)$ is due to \eqref{equ-Cardinality of the search domain}, in $(c)$ we multiplied and divide with $L$ and eliminate the limit term which is multiplied by 3 since it goes to zero. $(d)$ follows from the strong law of large numbers were the normalized sum converge with probability one to the expected value of $\chi^2_1$ r.v. which is one. And lastly we define $E_2(L)=\frac{1}{2}(1-\frac{1}{L})\log{2}$.
\end{proof}
This result implies that the probability of having any non unit vector as the rate maximizer is decreasing exponentially to zero as the number of users grows.

\section{Compute and Forward Sum-Rate}\label{sec-Compute and Forward Sum-Rate}
In order that relay $m$ will be able to decode a linear combination with coefficients vector $\b{a}_m$, all messages' rates which are involved in the linear combination must be within the computation rate region \cite{nazer2011compute}. I.e., all the messages for which the corresponding entry in the coefficient vector is non zero. That is,

\begin{equation}
R_l<\min_{a_{ml} \neq 0} \cR(\b{h}_m,\b{a}_m)
\end{equation}

Hence, the sum rate of the system is defined as the sum of messages' rates, i.e.,
\begin{equation}\label{equ-Sum rate expression}
\sum_{l=1}^{L} \min_{m:a_{ml} \neq 0} \cR(\b{h}_m,\b{a}_m).
\end{equation}

Following the results from previous subsections, we would like to show that as the number of users grows, the system's sum-rate decreases to zero as well. That is, without scheduling users, not only each individual rate is negligible, this is true for the sum-rate as well. This will strengthen the necessity to schedule users in CF.

\begin{figure}[t]
\centering
    \includegraphics[width=0.4\textwidth]{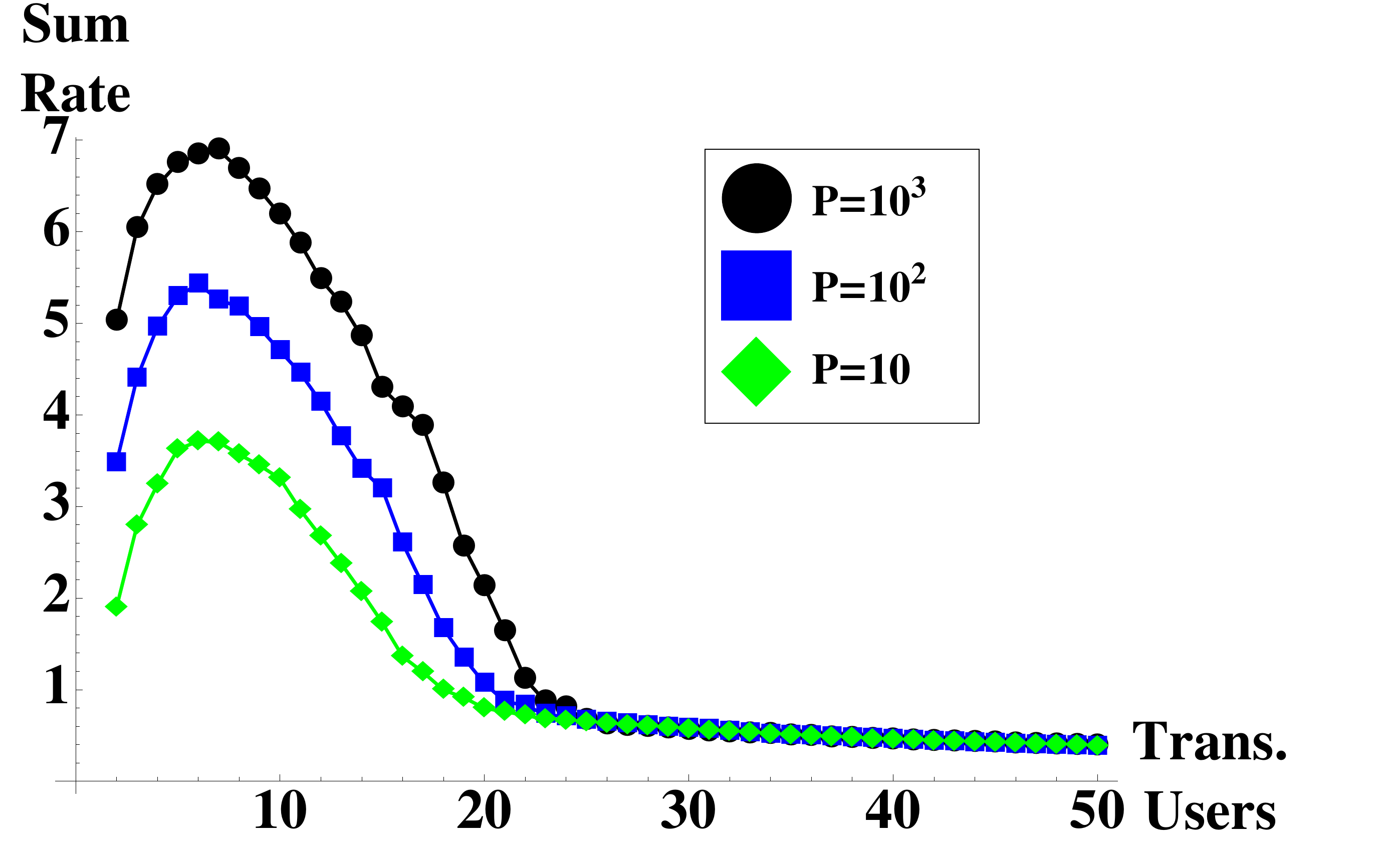}
\caption{The sum rate as give in \eqref{equ-Sum rate expression} for the case of 4 relays as a function of the number of simultaneously transmitting users, for different values of P.}
\label{fig-Rate_AsFunc_users}
\end{figure}

\begin{theorem}\label{the-Sum rate is going to zero}
As $L$ grows, the sum rate of CF is tends to zero, that is,
\begin{equation}
\lim_{L \rightarrow \infty} \sum_{l=1}^{L} \min_{m:a_{ml} \neq 0} \cR(\b{h}_m,\b{a}_m) = 0.
\end{equation}
\end{theorem}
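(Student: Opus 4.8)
The plan is to build directly on Theorem \ref{the-Probability for having a unit vector as the maximaizer over all other vectors}, which already shows that a single relay selects a unit coefficient vector with probability tending to one as $L\to\infty$. Since the number of relays $M$ is fixed, a union bound over the $M$ relays gives that the event $A_L=\{\text{all }M\text{ relays choose a unit vector}\}$ satisfies $\Pr(A_L)\to 1$. First I would condition on $A_L$ and reduce the sum-rate to a sum of at most $M$ \emph{single-user} computation rates: if relay $m$ chooses $e_{i_m}$ (which, by Lemma \ref{lem-Position of maximal value in a vector}, sits at the index of its strongest coefficient), then only the messages in $\{i_1,\dots,i_M\}$ carry a non-zero entry, so the min over an empty relay set makes every other message contribute zero, and
\begin{equation*}
\sum_{l=1}^{L}\min_{m:a_{ml}\neq 0}\cR(\b{h}_m,\b{a}_m)\;\le\;\sum_{m=1}^{M}\cR(\b{h}_m,e_{i_m})\;\le\;M\max_{m}\cR(\b{h}_m,e_{i_m}).
\end{equation*}

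Next I would evaluate the single-user rate. Substituting $\b{a}=e_i$ into the MMSE rate of Theorem \ref{the-Computation rate with MMSE} gives
\begin{equation*}
\cR(\b{h},e_i)=\frac{1}{2}\log^+\!\left(1-\frac{Ph_i^2}{1+P\|\b{h}\|^2}\right)^{-1}=\frac{1}{2}\log\!\left(1+\frac{Ph_i^2}{1+P\left(\|\b{h}\|^2-h_i^2\right)}\right).
\end{equation*}
The crux is that the ``desired'' term $h_i^2$ — even the largest one, $h_{\max}^2=\max_j h_j^2$ — is negligible next to the aggregate interference-plus-noise $\|\b{h}\|^2=\sum_j h_j^2$. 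By the strong law of large numbers $\tfrac{1}{L}\|\b{h}\|^2\to 1$, while the maximum of $L$ i.i.d.\ standard Gaussians obeys $h_{\max}^2=O(\log L)$. Hence $h_{\max}^2/\|\b{h}\|^2\to 0$, and each single-user rate decays like $(\log L)/L\to 0$.

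Combining the two steps, on $A_L$ the sum-rate is at most $M\cdot\tfrac12\log\!\big(1+\tfrac{Ph_{\max}^2}{1+P(\|\b{h}\|^2-h_{\max}^2)}\big)$, which vanishes because $M$ is fixed and the per-relay rate tends to zero. To promote this to the unconditional statement I would handle the complementary event $A_L^c$ through the crude Cauchy--Schwarz bound $(\b{h}^T\b{a})^2\le\|\b{h}\|^2\|\b{a}\|^2$, which yields $\cR(\b{h},\b{a})\le\tfrac12\log(1+P\|\b{h}\|^2)$ for \emph{any} $\b{a}$; with at most $L$ covered messages, the sum-rate on $A_L^c$ is $O(L\log L)$, whereas Theorem \ref{the-Probability for having a unit vector as the maximaizer over all other vectors} forces $\Pr(A_L^c)$ to decay exponentially in $L$. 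Thus $\Pr(\text{sum-rate}>\delta)\le\Pr(A_L^c)+\Pr(\{\text{sum-rate}>\delta\}\cap A_L)\to 0$ for every $\delta>0$, giving convergence in probability; the exponential-versus-polynomial comparison makes $\sum_L\Pr(A_L^c)$ summable, so Borel--Cantelli upgrades this to almost-sure convergence.

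I expect the main obstacle to be exactly this last step: pinning down the mode of convergence and controlling the rare event in which a non-trivial combination is chosen, since there a single computation rate can momentarily reach $O(\log L)$. Everything else is either a direct substitution into Theorem \ref{the-Computation rate with MMSE} or an invocation of standard concentration facts for $\|\b{h}\|^2$ and $h_{\max}^2$, which need only be cited rather than re-derived.
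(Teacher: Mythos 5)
Your proposal is correct and shares the paper's high-level skeleton---split into the ``all relays pick unit vectors'' regime and its complement, and let Theorem \ref{the-Probability for having a unit vector as the maximaizer over all other vectors} kill the latter---but the execution is genuinely different and in places cleaner. The paper's Appendix A writes the sum-rate itself as a combination weighted by $P(e_i)$ and $P(\overline{e_i})$, and to control the non-unit term it must bound the conditional moment $\mathbb{E}\left[\|\b{h}^{\overline{e}}\|^2\right]\leq L/P(\overline{e_i})$ and then chain Markov's inequality, Jensen's inequality, and $\log(1+x)\leq\sqrt{2x}$ against the exponential decay of $P(\overline{e_i})$. You instead decompose by the event $A_L$, so on $A_L^c$ nothing needs to be integrated at all: $\Pr(A_L^c)\to 0$ already suffices for convergence in probability, which sidesteps the conditional-expectation bookkeeping entirely (your $O(L\log L)$ estimate on $A_L^c$ is only needed if one wants convergence in expectation). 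On the unit-vector regime you are in fact more careful than the paper: the paper argues that for each realization the argument of the $\log$ tends to $1$, implicitly treating $h_i^2$ as fixed, whereas by Lemma \ref{lem-Position of maximal value in a vector} the selected index is the strongest one, so the signal term is $h_{\max}^2\sim 2\log L$ and grows with $L$; your explicit comparison $h_{\max}^2=O(\log L)$ versus $\|\b{h}\|^2\sim L$ closes this gap and yields the $(\log L)/L$ decay rate, with your empty-min convention for uncovered messages matching what the paper tacitly does when it sums over only $M$ relays in its step $(b)$. Two small caveats: the Borel--Cantelli upgrade to almost-sure convergence requires an exponential (not merely vanishing) bound on $\Pr(A_L^c)$, which lives in the proof of Theorem \ref{the-Probability for having a unit vector as the maximaizer over all other vectors} (roughly $L^2 2^{-(L-1)/2}$, modulo the law-of-large-numbers step there) rather than in its statement, and it presumes the channel realizations for different $L$ are coupled on a common probability space; neither issue affects your in-probability argument, which is the mode of convergence the paper itself establishes.
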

\begin{proof}
The proof outline is as follows. The sum rate expression is divided into two parts, which describe two scenarios. The first is for the case where a relay chooses a unit vector as the coefficients vector and the second is for the case where any other vector is chosen. The probabilities for that are $P(e_i)$ and $P(\overline{e_i})$, respectively. Then, we show that each part goes to zero by upper bounding the corresponding expressions.
The complete proof is given in appendix \ref{AppendixA}.
\end{proof}

Simulations for the sum-rate for different values of P can be found in Figure \ref{fig-Rate_AsFunc_users}. It is obvious that for large $L$, the sum-rate decreases, hence, for a fixed number of relays there is no use in scheduling a large number of users, as CF degenerates to choosing unit vectors and treating other users as noise. However, the simulations suggests a peak at a small number of transmitters. We explore this in the next section.


\section{Scheduling in Compute and Forward}\label{sec-Scheduling in CF}

Theorem \ref{the-Probability for having a unit vector as the maximaizer over all other vectors} and \ref{the-Sum rate is going to zero} suggest that a restriction on the number of simultaneously transmitting users should be made. That is, in order to apply the CF scheme for systems with a large number of sources, scheduling a smaller number of users should take place. 

\begin{figure}[t]
    \centering
        \includegraphics[width=0.5\textwidth]{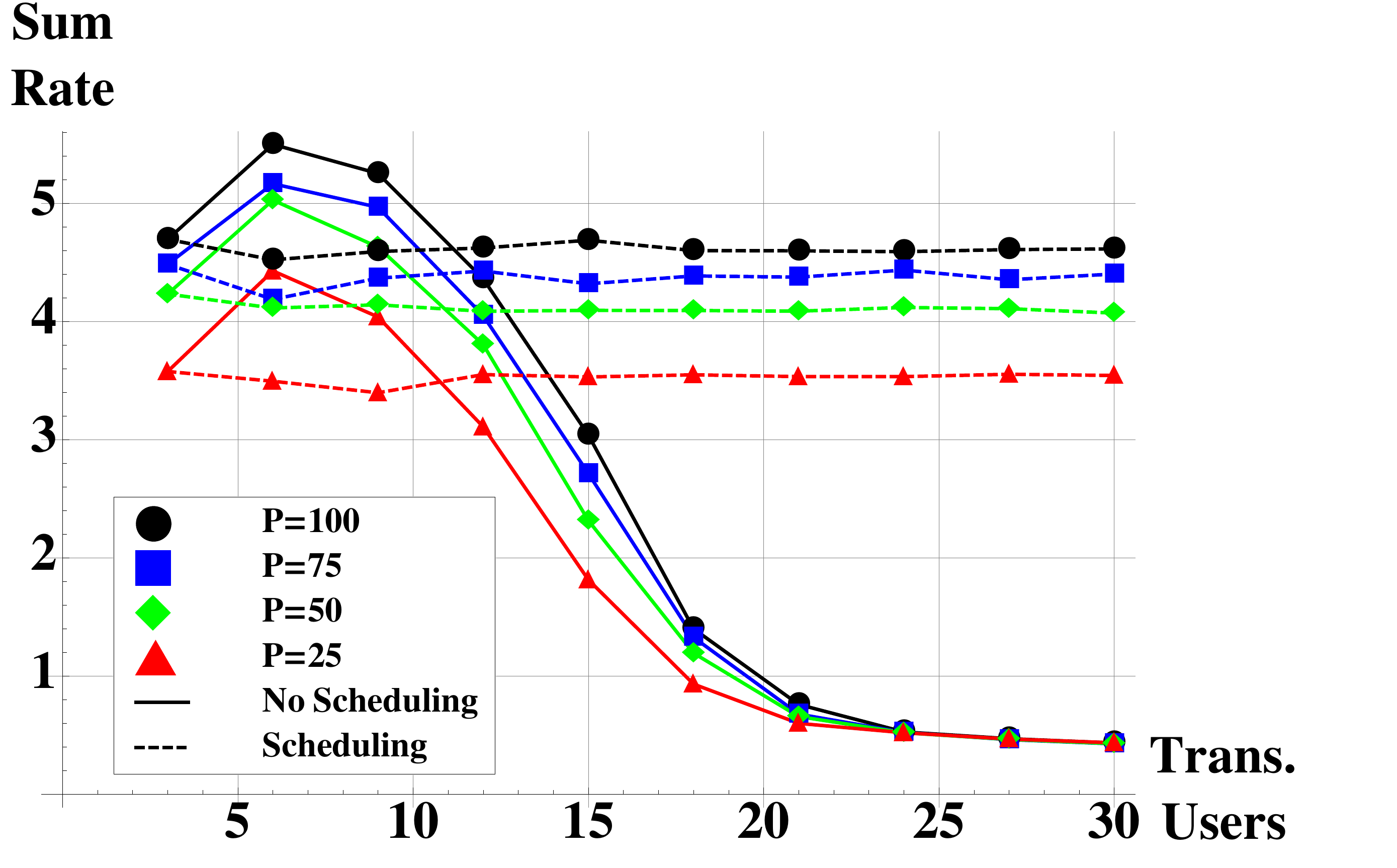}
    \caption{Simulation results for the average sum-rate per transmission. Here, the number of relays is 3 and scheduling was performed in a Round Robin manner, where in every phase 3 sources were scheduled among the transmitting users.}
       \label{fig-SumRate sch. and no sch. per slot}
\end{figure}

The most simple scheduling scheme is to schedule users in a Round Robin (RR) manner, where in each transmission only $k$ users may transmit simultaneously. The value of $k$ can be optimized, yet as a thump rule, one can schedule $M$ users (similar to the number of relays) is each transmission to obtain a sum-rate which is not going to zero. Figure \ref{fig-SumRate sch. and no sch. per slot} depicts such a scenario. In fact, even higher sum rates can be obtained if the number of scheduled users is \emph{higher than the number of relays}, i.e., the number for which the maximal sum-rate in Figure \ref{fig-Rate_AsFunc_users} is achieved. Still, it is clearly seen that it is not zero for $M$ scheduled sources, compared to the zero sum-rate when all $L$ users transmit and the relay use CF.

In fact, one can use existing results for the CF sum-rate for the case of equal number of sources and relays, and describe the sum-rate in each transmission under such a schedule. According to \cite{niesen2012degrees}, the sum rate for $M$ sources and $M$ relays is upper bounded by,

\begin{equation}\label{equ-Sum rate expression upper bounded}
\sum_{l=1}^{M} \min_{m:a_{ml} \neq 0} \cR(\b{h}_m,\b{a}_m) \leq \frac{1}{1+1/M}\log{P}+\log{\log{P}},
\end{equation}

for $M \geq 2$ and $P \geq 3$. A very coarse lower bound can be attained if the relays are forced to choose their coefficients vectors such that each relay $i$ chooses $a_i=e_i$. That is, an interference channel where each relay $i$ considers the interferences from all other sources $j, \ j \neq i$ as noise. Even with this one has

\small
\begin{equation}\label{equ-Sum rate expression lower bounded}
\begin{aligned}
&\sum_{l=1}^{M} \min_{m:a_{ml} \neq 0} \cR(\b{h}_m,\b{a}_m) \\
&\geq \sum_{l=1}^{M} \min_{m:a_{ml} \neq 0} \frac{1}{2} \log^+\left( \frac{1+P\|\b{h}_m\|^2}{1+P(\|\b{h}_m\|^2-h_l^2)} \right)\\
&\geq \sum_{l=1}^{M} \frac{1}{2}\log \left( 1+\frac{Ph_l^2}{1+P\sum \limits_{j\neq l}h_j^2} \right),
\end{aligned}
\end{equation}
\normalsize

which is not zero. Simulation results for the bounds and the optimal CF coefficient vectors are presented in Figure \ref{fig-SumRate Upper Lower As Func power}. From the aforesaid one can conclude that scheduling $M$ users for transmission is worthwhile with respect to the alternative of permitting all users transmit simultaneously. Of course, the scheduling policy has great impact on the performance, which can be increased if, for example, one schedules groups whose channel vectors are more suitable for CF.

\begin{figure}[t]
    \centering
        \includegraphics[width=0.5\textwidth]{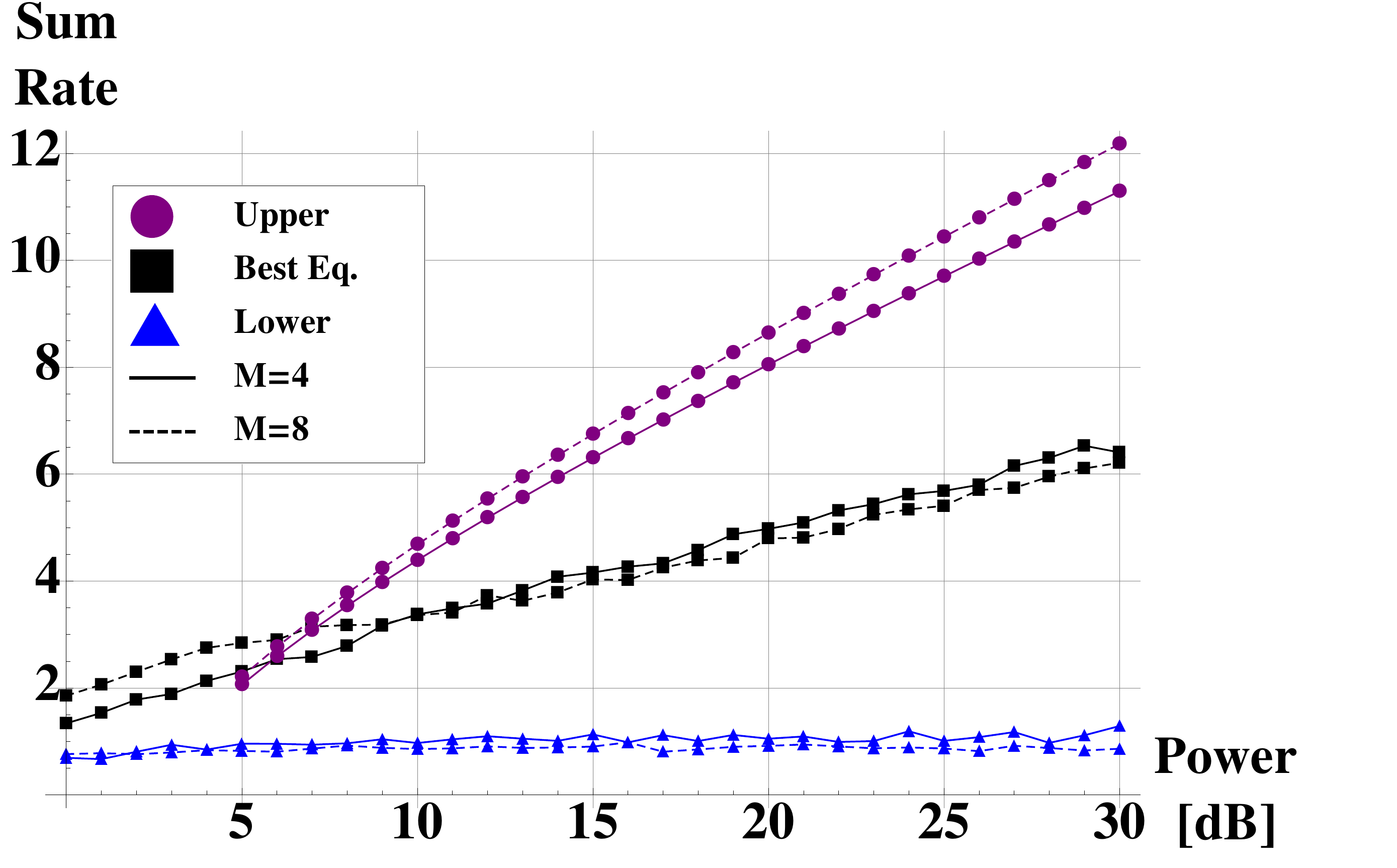}
    \caption{Simulation results for the average sum-rate compared with the upper and lower bounds given in \eqref{equ-Sum rate expression upper bounded} and \eqref{equ-Sum rate expression lower bounded}, respectively, as a function of the transmission power for $M=4$ and $M=8$.}
       \label{fig-SumRate Upper Lower As Func power}
\end{figure}

\section{Conclusion and future work}
This work gives evidence for the necessity of user scheduling under the CF scheme for large number of simultaneously transmitting users. We proved, with probability which goes to one, that in this regime the optimal choice of decoding at the relays is to decode the user with best channel instead of any other linear combination of the transmitted signals. Thus, CF becomes degenerate and it is more preferable to apply scheduling for much smaller group size. We show that even with a simple scheduling policy the sum-rate does not goes to zero and would like, as future work, to proceed and explore scheduling policies which exploit the decoding procedure of CF.





%

\appendices

\section{Proof for Theorem \ref{the-Sum rate is going to zero}} 

\label{AppendixA} 

\begin{proof}
The probabilities $P(e_i)$ and $P(\overline{e_i})$ define a partition on the channel vectors a relays sees, specifically we define,
\begin{equation}\label{equ-Partition of channel vectors}
\begin{aligned}
& H_e=\{ \b{h}\in \R^L | \argmin_{\textbf{a}\in\mathbb{Z}^L \backslash \{\textbf{0}\}} f(\textbf{a})=e_i\} \\
& H_{\overline{e}}=\{ \b{h}\in \R^L | \argmin_{\textbf{a}\in\mathbb{Z}^L \backslash \{\textbf{0}\}} f(\textbf{a}) \neq e_i\}.
\end{aligned}
\end{equation}

That is, with probability $P(e_i)$ a relay sees a channel vector $\b{h}\in H_e$ and with probability $P(\overline{e_i})$ a relay sees a channel vector $\b{h}\in H_{\overline{e}}$. We note $\b{h}^{e}$ and $\b{h}^{\overline{e}}$ as a channel vectors which belongs to $H_e$ and $H_{\overline{e}}$ respectively.

Under the above definitions, the sum rate can be written as follows,
\begin{equation}\label{equ-The sum rate partitioned to two terms}
\begin{aligned}
 &\sum_{l=1}^{L} \min_{m:a_{ml} \neq 0} \cR(\b{h}_m,\b{a}_m)\\
 &= \sum_{l=1}^{L} P(e_i)\min_{m:e_{ml} \neq 0} \cR(\b{h}_m^{e},e_i) \\
 &\quad \quad \quad \quad \quad \quad \quad +P(\overline{e_i})\min_{m:a_{ml} \neq 0} \cR(\b{h}_m^{\overline{e}},\b{a}_m).
 \end{aligned}
\end{equation}

We treat the two terms above separately where the second term represents the sum rate for the case which the optimal coefficients vectors may be any integer vector excluding the unit vector $e_i$.  And the first term is for the case that the optimal coefficients vector is $e_i$.  We will show that both terms goes to zero while starting with the second term.

\begin{equation}
\begin{aligned}
 &\sum_{l=1}^{L} P(\overline{e_i})\min_{m:a_{ml} \neq 0} \cR(\b{h}_m^{\overline{e}},\b{a}_m)\\
 &=  P(\overline{e_i}) \sum_{l=1}^{L} \min_{m:a_{ml} \neq 0} \cR(\b{h}_m^{\overline{e}},\b{a}_m)\\
 & \leq P(\overline{e_i}) L \max_m \cR(\b{h}_m^{\overline{e}},\b{a}_m)\\
 &= P(\overline{e_i}) L \max_m \biggl\{ \\
 &\quad \left. \frac{1}{2}\log^+\left( \frac{1+P\|\mathbf{h}_m^{\overline{e}}\|^2}{ \|\mathbf{a}_m\|^2+ P(\|\mathbf{a}_m\|^2\|\mathbf{h}_m^{\overline{e}}\|^2 - ((\mathbf{h}_m^{\overline{e}})^T\mathbf{a}_m)^2)} \right)\right\}\\
 &\leq P(\overline{e_i}) L \max_m \frac{1}{2}\log^+\left( 1+P\|\mathbf{h}_m^{\overline{e}}\|^2 \right) \\
& = \max_m \left\{ P(\overline{e_i}) L \frac{1}{2}\log^+\left( 1+P\|\mathbf{h}_m^{\overline{e}}\|^2 \right) \right\}.
 \end{aligned}
\end{equation}

Define $R_L^{\overline{e}}= P(\overline{e_i}) L \frac{1}{2}\log^+\left( 1+P\|\mathbf{h}_m^{\overline{e}}\|^2 \right)$, we would like to show that $R_L^{\overline{e}} \overset{p}{\rightarrow} 0$, that is,
\begin{equation}
\lim_{L \rightarrow \infty}  P(R_L^{\overline{e}} \geq \epsilon)=0,
\end{equation}
for all $\epsilon > 0$.

Using The Markov and Jensen's inequalities we have, 
\begin{equation}\label{equ-Rate after Markov and Jensen's}
\begin{aligned}
&P(R_L^{\overline{e}} \geq \epsilon) \leq \frac{1}{\epsilon} \mathop{\mathbb{E}} \left[R_L^{\overline{e}} \right]\\
&=\frac{1}{\epsilon} \mathop{\mathbb{E}} \left[ P(\overline{e_i}) L \frac{1}{2}\log^+\left( 1+P\|\mathbf{h}_m^{\overline{e}}\|^2 \right) \right]\\
&\leq \frac{1}{\epsilon}  P(\overline{e_i}) \frac{L}{2}\log^+\left( 1+P \mathop{\mathbb{E}} \left[ \|\mathbf{h}_m^{\overline{e}}\|^2\right] \right),
\end{aligned}
\end{equation}

therefore, we are interested in analyzing the expectation of the squared norm values belonging to all channel vectors $\b{h}^{\overline{e}}$. 

Remember that, without any constraints, the channel vector $\b{h}$ is a Gaussian random vector which it's squared norm follows the $\chi^2_L$ distribution, we shell note as $f_{\chi^2_L}(x)$. A single squared norm value can belong to a several different Gaussian random vectors. Hence, we define $H_{\overline{e}}^{norm}$ as the set of squared norm values which belongs to $\b{h}^{\overline{e}}$, formally,

\begin{equation}
H_{\overline{e}}^{norm} =\{ \|\b{h}\|^2 \in \R | \b{h} \in H_{\overline{e}}\},
\end{equation}
which means in words, all possible squared norm values which belong to all vectors $\b{h}^{\overline{e}}$. We define then, $P(\xi)=P(\|\b{h}\|^2 \in H_{\overline{e}}^{norm})$ as the probability to belong to $H_{\overline{e}}^{norm}$.  That is,

\begin{equation}
\int \limits_{H_{\overline{e}}^{norm}} f_{\chi^2_L}(x) dx = P(\xi).
\end{equation}

Returning to the expectation in \eqref{equ-Rate after Markov and Jensen's} we have,

\begin{equation}
\begin{aligned}
\mathop{\mathbb{E}} \left[ \|\mathbf{h}_m^{\overline{e}}\|^2\right] &= \int \limits_{H_{\overline{e}}^{norm}} x \frac{f_{\chi^2_L}(x)}{P(\xi)} dx 
\leq  \int_{\alpha}^{\infty} x \frac{f_{\chi^2_L}(x)}{P(\xi)} dx\\
& \overset{(a)}{\leq}  \int_{\alpha}^{\infty} x \frac{f_{\chi^2_L}(x)}{P(\overline{e_i})} dx
\leq \frac{1}{P(\overline{e_i})} \int_{0}^{\infty} x f_{\chi^2_L}(x) dx\\
&= \frac{L}{P(\overline{e_i})}. 
\end{aligned}
\end{equation}
Where, $\alpha$ satisfies $\int_{\alpha}^{\infty} f_{\chi^2_L}(x) dx = P(\xi)$ and $(a)$ is due to the fact that $P(\xi) \geq P(\overline{e_i})$ since it may happen that two vectors $\b{h}^{e}$ and $\b{h}^{\overline{e}}$ would have the same squared norm value.

Applying the expectation's upper bound in \eqref{equ-Rate after Markov and Jensen's} we have,

\begin{equation}\label{equ-Rate after Markov and Jensen's and upper bound on expectation}
\begin{aligned}
& \frac{1}{\epsilon}  P(\overline{e_i}) \frac{L}{2}\log^+\left( 1+P \mathop{\mathbb{E}} \left[ \|\mathbf{h}_m^{\overline{e}}\|^2\right] \right)\\
&  \leq \frac{1}{\epsilon}  P(\overline{e_i}) \frac{L}{2} \log^+\left( 1+\frac{PL}{P(\overline{e_i})}  \right)\\
&\overset{(a)}{\leq} \frac{1}{\epsilon}  P(\overline{e_i}) \frac{L}{2} \sqrt{2 \frac{PL}{P(\overline{e_i})} }
= \frac{1}{\epsilon} \frac{L}{2} \sqrt{2PLP(\overline{e_i})}  \\
&\overset{(b)}{\leq} \frac{1}{\epsilon} \frac{L}{2}   \sqrt{8P^2L^3e^{-LE_2(L)}}
= \frac{1}{\epsilon} PL^2\sqrt{2L}e^{-LE_3(L)}
\end{aligned}
\end{equation}  

where $(a)$ is due the bound $\log(1+x)\leq \sqrt{2x}$, $(b)$ following directly from Theorem \ref{the-Probability for having a unit vector as the maximaizer over all other vectors} and $E_3(L)=\frac{1}{4}(1-\frac{1}{L})\log{2}$.
Considering the above, as L grows, the second term of \eqref{equ-The sum rate partitioned to two terms} is going to zero, that is,

\begin{equation}
\begin{aligned}
&\lim_{L \rightarrow \infty}  \sum_{l=1}^{L} P(\overline{e_i})\min_{m:a_{ml} \neq 0} \cR(\b{h}_m^{\overline{e}},\b{a}_m) \\
&\leq \lim_{L \rightarrow \infty} \frac{1}{\epsilon} \frac{PL^2\sqrt{L}}{\sqrt{0.5}}e^{-LE_3(L)} =0
\end{aligned}
\end{equation}
for all $\epsilon>0$.

Thus, we are left with the first term in \eqref{equ-The sum rate partitioned to two terms}
\begin{equation}
\begin{aligned}
 &\lim_{L \rightarrow \infty} \sum_{l=1}^{L} P(e_i) \min_{m:e_{il} \neq 0} \cR(\b{h}_m,e_i)
 \leq \sum_{l=1}^{L} \min_{m:e_{il} \neq 0} \cR(\b{h}_m,e_i) \\
& \overset{(a)}{=} \lim_{L \rightarrow \infty} \sum_{l=1}^{L} \min_{m:e_{il} \neq 0} \frac{1}{2} \log^+\left( \frac{1+P\|\b{h}_m\|^2}{1+P(\|\b{h}_m\|^2-h_i^2)} \right)\\
 & \overset{(b)}{\leq} \lim_{L \rightarrow \infty} \sum_{m=1}^{M} \frac{1}{2} \log^+\left( \frac{1+P\|\b{h}_m\|^2}{1+P(\|\b{h}_m\|^2-h_i^2)} \right)\\
& = \sum_{m=1}^{M} \frac{1}{2} \log^+\left(  \lim_{L \rightarrow \infty} \frac{1+P\|\b{h}_m\|^2}{1+P(\|\b{h}_m\|^2-h_i^2)} \right) =0
 \end{aligned}
\end{equation}
where in $(a)$ we set the unit vector $e_i$ in the rate expression $\cR(\b{h}_m,\b{a}_m)$. The upper bound $(b)$ is for the best case scenario for which each relay has different unit vector $e_i$. Finally, it is clear that as $L$ grows for each realization of $\b{h}_m$ the argument of the $\log$ is going to $1$.
\end{proof}





\bibliographystyle{IEEEtran}
\bibliography{../main_document/Bibliography}

\end{document}